\newcommand{\CS}{\mathbb{S}}
\newcommand{\CW}{\mathbb{W}}
\newcommand{\CC}{\mathbb{C}}
\newcommand{\bX}{\mathbf{X}}
\newtheorem{theorem}{Theorem}
\newtheorem{corollary}{Corollary}
\begin{document}

\title{Assessing Web Fingerprinting Risk\footnote{A version of this report to appear in the proceedings of The Web Conference (WWW) 2024. This version contains additional material in the appendix.}}
\author{Enrico Bacis \and Igor Bilogrevic \and Robert Busa-Fekete \and Asanka Herath \and Antonio Sartori \and Umar Syed}
\date{Google}

\maketitle

\begin{abstract}
Modern Web APIs allow developers to provide extensively customized experiences for website visitors, but the richness of the device information they provide also make them vulnerable to being abused to construct browser \emph{fingerprints}, device-specific identifiers that enable covert tracking of users even when cookies are disabled.

Previous research has established \emph{entropy}, a measure of information, as the key metric for quantifying fingerprinting risk. However, earlier studies had two major limitations. First, their entropy estimates were based on either a single website or a very small sample of devices. Second, they did not adequately consider correlations among different Web APIs, potentially grossly overestimating their fingerprinting risk.

We provide the first study of browser fingerprinting which addresses the limitations of prior work. Our study is based on actual visited pages and Web APIs reported by tens of millions of real Chrome browsers in-the-wild. We accounted for the dependencies and correlations among Web APIs, which is crucial for obtaining more realistic entropy estimates. We also developed a novel experimental design that accurately and efficiently estimates entropy while never observing too much information from any single user. Our results provide an understanding of the distribution of entropy for different website categories, confirm the utility of entropy as a fingerprinting proxy, and offer a method for evaluating browser enhancements which are intended to mitigate fingerprinting.
\end{abstract}

\section{Introduction}\label{section:introduction}

Modern websites make extensive use of Web APIs to provide sophisticated functionality to their users. Some of these APIs reveal information about the user's device that developers can use to tailor the browsing experience for individual users. For example, a document editing website might only render the fonts that are installed on a user’s system, or a video conferencing website might leverage a device’s graphics card to accelerate performance.
However, these granular details can also be used to generate browser \emph{fingerprints}, which are (near) unique identifiers linked to a specific device. Such fingerprints can be used to track users across many different websites, often without their knowledge, and thus pose a major privacy risk. Unlike cookies, fingerprints cannot be easily disabled or reset by users. Although browser fingerprinting can be used to enhance users' online safety (such as preventing fraud or abuse), it is generally perceived as intrusive and therefore discouraged by several popular browsers (e.g., Safari, Firefox, Edge and Brave). 

To better assess the risk of browser fingerprinting for users and the effectiveness of anti-fingerprinting methods of browsers, we conducted a large-scale measurement study of Web APIs that focused on quantifying their \emph{entropy}, a measure of information. Each Web API exposes users to a degree of fingerprinting risk that depends on the distribution of its values in the user population. A Web API whose value distribution is very concentrated on a few possible values (Figure~\ref{figure:distributions}(a)) provides less identifying information about the average user than a function whose value distribution is very spread out (Figure~\ref{figure:distributions}(b)), since sharing values with many users is better for preserving anonymity. Because entropy increases with the spread of a distribution, it serves as an indicator of how well-suited a Web API is for re-identifying users, and has been widely used by researchers as a key metric for quantifying fingerprinting risk \citep{eckersley2010unique,Laperdrix2020-yj}.

Previous entropy measurement studies suffered from two major limitations. They were either limited to a small sample of websites or devices, and/or they ignored potential correlations among Web APIs, which made it challenging for them to provide realistic estimates of fingerprinting risk.
Our novel experimental design allowed us to collect data about dozens of Web APIs called by tens of millions of Chrome browser instances on hundreds of thousands of websites, while limiting the information collected from any single user, thus protecting user privacy. We estimated the correlation between many pairs of Web APIs, enabling us to not only estimate the fingerprinting risk posed by individual Web APIs, but also the joint risk posed by sets of them. Due to the privacy protection mechanisms put in place by our study, we were able to observe telemetry from real user devices instead of an automated crawler, reflecting the real-world behavior of websites as opposed to artificial responses to a bot.

Our results confirmed that Web API usage patterns vary significantly across website categories, and that the information revealed by many Web APIs is often highly correlated. We leveraged these findings to produce the broadest and most realistic measurement of the entropy of Web APIs to date. We validated the use of entropy as a proxy for fingerprinting by showing that it is strongly correlated with known fingerprinting behavior. We also studied how methods that are designed to mitigate fingerprinting can impact the web by quantifying how much information they prevent from being exposed to websites.

\begin{figure}[t]
  \centering
  \includegraphics[width=\linewidth, trim={0 1cm 0 0}]{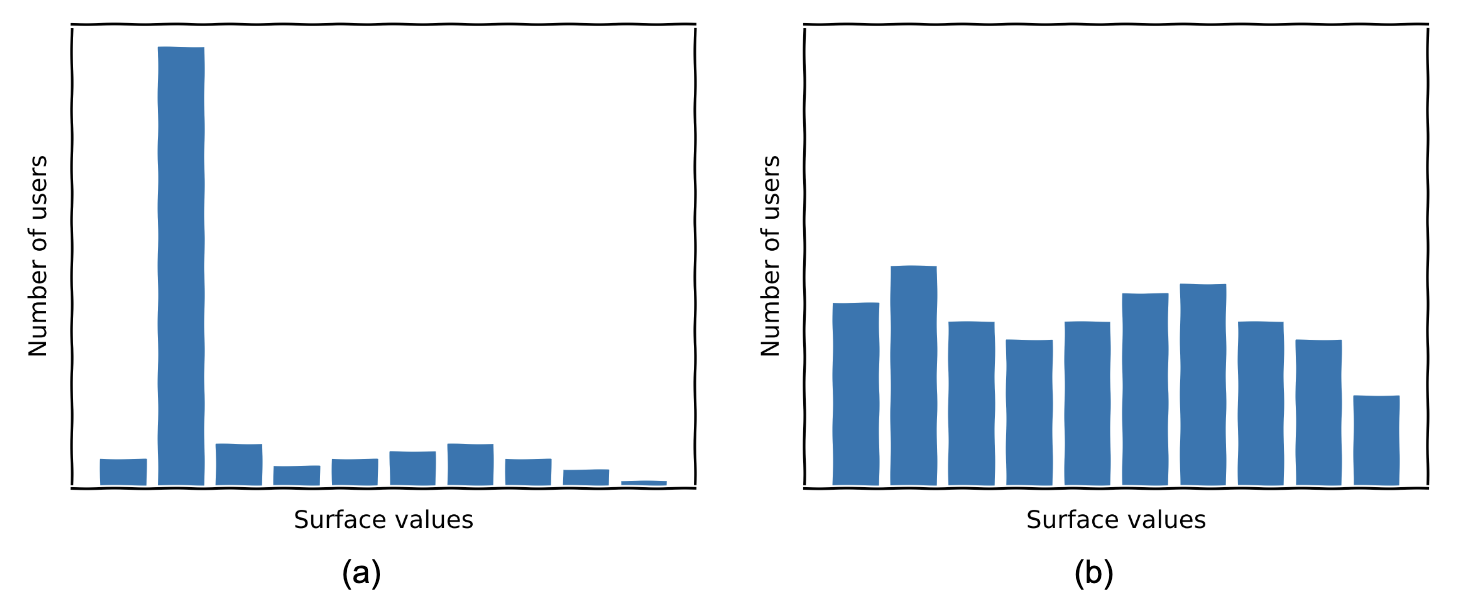}
  \caption{Two possible value distributions for a Web API, (a) low entropy and (b) high entropy.}
  \label{figure:distributions}
\end{figure}

\section{Related Work}\label{sec:rel_work}
Browser fingerprinting has attracted a vast amount of research over the past 15 years~\citep{Laperdrix2020-yj,zhang2022survey}. From a broad perspective, prior works that are most related to our study can be grouped in two categories, depending on their contributions:
\begin{enumerate}
    \item Large-scale measurement studies on browser and attribute uniqueness~\citep{eckersley2010unique,laperdrix2016beauty,gomez2018hiding,Andriamilanto2021-og}.
    \item Attribute set selection studies and inter-attribute correlation~\citep{flood2012browser,fifield2015fingerprinting,andriamilanto2020fpselect,Tanabe2019-pk,andriamilanto2021large,pugliese2020long}.
\end{enumerate}

When studying browser uniqueness, it is important to have access to a large and diverse dataset in order to have as many samples from the tail-end of the distribution as possible. These infrequent and uncommon values usually carry the highest amount of identifiable information, and they can have a significant impact on the study results. Therefore, we only discuss prior works that are based on large-scale experiments with real browsers visiting public websites.

Furthermore, as this work focuses exclusively on uniqueness and fingerprinting, we do not discuss the broader topic of online tracking, which is often associated with fingerprinting. A survey on that topic can be found in~\citep{bujlow2017survey}.

\subsection{Browser uniqueness}

Considering the vast amount of different configurations of hardware and software components of Internet-connected devices, it is unsurprising to find some configurations to be (nearly) unique. That was the underlying premise of the initial studies of browser uniqueness as observed through web APIs~\citep{Laperdrix2020-yj}. To date, there are four large-scale studies that attempted to measure browser uniqueness in-the-wild~\citep{eckersley2010unique,laperdrix2016beauty,gomez2018hiding,Andriamilanto2021-og}. 

In his pioneering work, \cite{eckersley2010unique} created a test site\footnote{The initial website was \url{https://panopticlick.eff.org}, which in 2017 moved under \url{https://coveryourtracks.eff.org}.} where visitors could choose to have their browser analysed and fingerprinted via HTTP headers, JavaScript and plugins. The two-week experiment gathered over 470k fingerprints, 83.6\% of which were unique. This study, which was the first to use the expression ``browser fingerprinting'', was also the first to use information entropy (in terms of ``surprisal'') as a proxy for privacy risk on the web, and also among the first to quantify the evolution of fingerprints over time (see~\cite{li2020touched} for more recent results on fingerprints evolution over time). They estimated that a browser (in 2010) carried at least 18.1 bits of entropy, with the top-3 properties contributing with (individually) 15.4 bits (plugins), 13.9 bits (fonts), and 10 bits (user agent).

\cite{laperdrix2016beauty} conducted an analysis of over 118k fingerprints over a 1-year period, by deploying a test website,\footnote{\url{https://amiunique.org}} similar to Eckersley's approach from 6 years before. While, on the one hand, the results confirmed the prior findings as 89.4\% of fingerprints were unique, on the other hand they also found significant differences in the prevalence and type of fingerprinting techniques. In particular, the authors found that the availability of the HTML5 canvas element and the wide browser support for webGL APIs contributed significantly to the amount of uniqueness, especially on mobile devices, in the absence of plugins and Flash.

Contrary to the previous two studies~\citep{eckersley2010unique,laperdrix2016beauty}, which created a dedicated test website where visitors could choose to have their fingerprints collected, \cite{gomez2018hiding} collected more than 2 million fingerprints on one of the top-15 most visited French websites. In contrast to the prior results, their results indicated that only 33.6\% of all fingerprints were unique, highlighting the potentially large impact that the data collection methodology can have on the findings. \cite{Andriamilanto2021-og} conducted a 6-month study where they collected more than 4 million fingerprints from almost 2 million visitors to the same French website as~\cite{gomez2018hiding}. However, this time they collected 216 attributes (as opposed to 17) from a significantly larger user population. In line with prior results~\citep{eckersley2010unique,laperdrix2016beauty}, the authors observed that 81\% of all fingerprints are unique, and more than 90\% of the attributes remained unchanged even after 6 months.

While the four studies discussed so far have provided important insights into browser uniqueness, the strength of their results was limited; they only studied uniqueness broadly (i.e., uniqueness of the full fingerprints) or uniqueness of individual attributes (e.g., user agent strings). Yet, different device properties (both hardware and software) are often correlated with each other~\citep{fifield2015fingerprinting,andriamilanto2020fpselect,pugliese2020long}, which makes it crucial to consider them not in \emph{isolation} but \emph{jointly}, as discussed hereafter.

\subsection{Attribute set selection}
\cite{fifield2015fingerprinting} devised a greedy attribute set selection heuristic that considered the conditional entropy when adding subsequent attributes to a set, in order to obtain the smallest cardinality with the highest fingerprinting potential. Although it provided interesting results, the study was limited in two ways. First, it was based on a very small sample of 1,016 participants from Mechanical Turk. Second, the algorithm only worked on fonts attributes. 

\cite{andriamilanto2020fpselect} framed the attribute set selection problem as an optimization problem with two joint objectives: distinctiveness and usability cost. Their greedy algorithm explores a given number of potential sets, by adding attributes one by one to each such set iteratively (similarly to~\cite{fifield2015fingerprinting}), and retaining the sets that satisfy the constraints. Although this study used the same large dataset as~\cite{gomez2018hiding}, it still only considered sets that were constructed by adding individual attributes one by one.

\cite{pugliese2020long} conducted a study with 1,304 users and collected 305 attributes, resulting in 88,088 measurements. Their greedy attribute set selection method was based on maximising the stability of trackable attributes as well as the number of participants with trackable attributes, resulting in different sets for mobile and desktop devices. However, the algorithm added attributed one-by-one from the small dataset, and recomputed the objective function until no further improvement can be achieved.

In contrast to related work, our study has a much larger dataset, collected from significantly more websites, a more accurate joint entropy estimation algorithm and better privacy protection. In particular, our study includes data from more than 10 million different Chrome clients who visited more than 100,000 websites in-the-wild. This represents 10x more participants as compared to the other largest study to date~\citep{Andriamilanto2021-og}, which was done on a single website.

\section{Notation and terminology}
\label{sec:notation}

The web consists of a set of websites $\CW$ and the web population consists of a set of clients $\CC$, each of which is a browser installation on a single device. A \emph{surface} is a Web API combined with a particular set of input arguments. For example, the Javascript function {\tt WebGLRenderingContext.getParameter(pname)} returns a parameter value of a client's graphics environment, where {\tt pname} specifies the parameter. Let $\CS$ be the set of all possible surfaces, and let $x_{s, c}$ be the return value of surface $s$ for client $c$. Website $w$ observes surface value $x_{s, c}$ if and only if client $c$ visits website $w$ and the site executes a script that invokes the Web API with input arguments corresponding to surface $s$. 

Let client $C$ be chosen uniformly at random from $\CC$. For any surface $s \in \CS$ define the random variable $X_s = x_{s, C}$, and for any surface set $S = \{s_1, \ldots, s_k\} \subseteq \CS$ define the random variable $\bX_S = (X_{s_1}, \ldots, X_{s_k})$. In other words, the distribution of $\bX_S$ is the distribution of joint values for surfaces $S$ in the client population.

A website can extract different pieces of identifying information from a device over multiple subsequent visits, and therefore it is important to account for this accumulation of information when estimating fingerprinting risk. We define a \emph{session} to be the aggregation of all visits to a specific website by a specific client over a 4 week period.

\section{Efficient entropy estimation}\label{sec:entropy}

The entropy of a discrete random variable $X$ is defined
\[
H(X) = -\sum_x \Pr[X = x] \log \Pr[X = x].
\]
For any surface set $S \subseteq \CS$, the entropy $H(X_S)$ quantifies the fingerprinting risk to users who reveal their values for surfaces in $S$ to websites. The primary objective of our measurement study is to estimate $H(X_S)$ for various surface sets $S$, a goal that presents at least two major challenges. First, the number of possible surfaces sets is $2^{|\CS|}$, and so it is not practical to form an independent estimate for each surface set. Second, for a given surface set $S$, the number of samples from $X_S$ needed to estimate $H(X_S)$ can be prohibitive. For example, if $|S| = k$, then even if each surface in $S$ has only two possible values, the optimal entropy estimation algorithm needs $\Omega(\frac{2^k}{k})$ samples to estimate $H(X_S)$ accurately \citep{wu2016minimax}.

Instead of directly estimating each $H(X_S)$, we estimate an upper bound based on a well-known decomposition due to \citet{chow1968approximating}. Let $T$ be any tree whose nodes are the surfaces $S$. \citeauthor{chow1968approximating} showed that  
\begin{equation}
H(X_S) \le \sum_{s \in S} H(X_s) - \sum_{(s, s') \in T} I(X_s; X_{s'}) \label{eq:chowliu}
\end{equation}
where the second sum is over the edges of $T$ and the mutual information $I(X; Y)$ between random variables $X$ and $Y$ is defined
\[
I(X; Y) = \sum_{x, y} \Pr[X = x, Y = y] \log \frac{\Pr[X = x, Y = y]}{\Pr[X = x]\Pr[Y = y]}.
\]

Mutual information, which is always non-negative, can be viewed as a measure of the correlation between two random variables. If $X$ and $Y$ are highly correlated then $I(X; Y)$ is large, and if they are independent then $I(X; Y)$ is zero. Thus when surfaces are correlated the Chow-Liu decomposition gives a much tighter upper bound on entropy than the more naive bound $H(X_S) \le \sum_{s \in S} H(X_s)$, which has been used in previous work \citep{mendes2011nophish}. For example, our measurement study found that the Javascript Web APIs that return a user's language setting and keyboard layout have high mutual information (since keyboard layouts are often designed for a particular language), so their joint entropy is significantly lower than the sum of their individual entropies.

At most $|\CS|^2$ entropy and mutual information terms must be estimated in order to compute the Chow-Liu upper bound in Eq.~\eqref{eq:chowliu} for all subsets of $\CS$, which is much less than $2^{|\CS|}$. As we explain in Section \ref{sec:methodology}, we make the problem even more tractable by not estimating the mutual information between every pair of surfaces, but instead judiciously choosing surface pairs that we expect to be correlated. Finally, each term in the Chow-Liu upper bound can be estimated efficiently. If each surface has $O(v)$ possible values, then the entropy terms need $O(v)$ independent samples to estimate accurately, and the mutual information terms need $O(v^2)$ samples (see Appendix \ref{sec:samplecomplexity} for a more precise statement of the sample requirements).

\section{Experimental methodology}
\label{sec:methodology}

We constructed the set of all surfaces $\CS$ by considering all Web APIs exposed via either JavaScript or CSS, manually selecting surfaces that we believed could be sources of stable entropy, while excluding APIs which are gated behind a permission prompt. This led to $\CS$ containing a total of 5383 surfaces derived from 161 Web APIs (see Appendix \ref{sec:webapis} for complete list of the Web APIs). Note that this approach placed some potential sources of entropy outside the scope of our experiment, such as any information exposed by installed Chrome extensions.

Our experimental goal was to estimate the entropy $H(X_S)$ of many different surface subsets $S \subseteq \CS$ subject to the following constraints:
\begin{itemize}
    \item {\bf Privacy:} Our experiment did not collect too much information from any single user.
    \item {\bf Performance:} Our experiment did not interfere with the proper rendering of websites or unduly slow the browser.
\end{itemize}

Given these constraints, we made several key choices when designing our experimental methodology.
\begin{itemize}
    \item We approximated the entropy $H(X_S)$ of a surface set $S \subseteq \CS$ by using the Chow-Liu decomposition which, as we discussed in Section \ref{sec:entropy}, reduces overall sample requirements by an exponential factor.
    \item We took a `passive' approach to data collection, by only recording those surface values that were actually observed by websites during client sessions, rather than a more `active' approach that would execute Web APIs on the client's behalf even if they were not called by any script. Passive sampling prevented performance degradation or unexpected browser behavior due to Web API side effects, and ensured that a user's experience of a website was exactly as the web developer intended.
    \item We kept the client-side state and logic associated with our experiment as minimal as possible. Each client $c \in \CC$ stored only a list of candidate surfaces $L_c \subseteq \CS$, a reporting probability $p_c \in [0, 1]$, and the number of reported surface values $k_c$. Whenever client $c$ visited a website that observed a value for a surface in $L_c$, with probability $p_c$ that surface value was reported to our servers and the value of $k_c$ was incremented. Surface values were briefly cached on the client before reporting, but were not stored for an extended duration.
    \item We preserved user privacy by hashing surface values on the client before reporting them, discarded any surface values reported by fewer than 50 clients, and deleted all unaggregated data after 4 weeks. Note that hashing had no impact on entropy estimation accuracy, since entropy is invariant to renaming the surface values. We also never collected more than 20 bits of device-specific information from any client (we explain below how this requirement was enforced). Only a small fraction of Chrome clients were included in our experiment, accounting for less than 10 million clients. Clients had to have the following two settings enabled in Chrome to be included in the experiment: (1) `Share usage reports and crash analytics with Google', and (2) `Make searches and browsing better / Sends URLs of pages you visit to Google'~\citep{chrome-white-paper}.
\end{itemize}

We conducted the experiment in several phases, with the results of each phase informing the design of subsequent phases:

\begin{itemize}
    \item {\bf Phase 1:} In order to obtain a high-level understanding of Web API usage, we set $L_c = \CS$ and $p_c = 1$ for each client $c$, but only reported which surfaces were called on the client, and not the surface values (not even hashed versions of the values). No device-specific information was collected during this phase. 
    \item {\bf Phase 2:} We partitioned the surfaces $\CS$ into 10 large ``families'' (see Section \ref{section:call-frequency}) based on which surfaces we expected to be correlated with each other. We set $L_c$ for each client $c$ to one of the 10 families, thereby increasing the likelihood of observing joint samples from pairs of surfaces from the same family, which in turn led to tighter estimates of the Chow-Liu upper bound (Eq.~\eqref{eq:chowliu}). We also set the reporting probability $p_c = \min\{1, \frac{40}{n}\}$, where $n$ is the expected number of surfaces in client $c$'s assigned family that will be called during the experiment, a value determined by the data collected in Phase 1. However, once $k_c$ (the number of surface values reported by client $c$) reached 40, the value of $p_c$ was set to $0$. The effect of choosing these values for $p_c$ is that the typical client reported an unbiased random sample of their surface values, but no client reported more than 40 surface values. Note that setting a significantly higher initial value for $p_c$ would have skewed the reports towards surfaces that are called earlier in the experiment (because of the 40 report limit per client) and biased the data collection. We conservatively choose a 40 report limit during this phase based on an earlier, much smaller and unpublished experiment which estimated that the average surface exposes 0.5 bits of entropy. 
    \item {\bf Phase 3:} While the previous phase collected enough data to estimate the entropy of each individual surface, we found that it was insufficient to estimate the mutual information between many pairs of surfaces (particularly surfaces from two different families), essentially because the probability of jointly observing a pair of surface values from the same client $c$ is only $p_c^2$. So we augmented our data collection with an additional sampling phase that focused on popular Web APIs. We identified a set of 60 popular surfaces (the `core', see Section \ref{section:clustering}) and created many overlapping subsets of these surfaces, with each subset constructed so that its entropy was expected to be low, based on the data collected in Phase 2. For each client $c$ we let $L_c$ be one of these subsets and let $p_c = 1$. We estimated that no client reported more than 20 bits of information in Phase 3. Note that the clients used in this phase were disjoint from the clients used in the previous phase.
\end{itemize}

We used the individual surface entropy estimates and pairwise mutual information estimates computed from the data collected in Phases 2 and 3 to calculate an upper bound on the entropy $H(X_S)$ for many surface sets $S \subseteq \CS$, based the Chow-Liu decomposition. When forming the decomposition, we omitted mutual information terms with insufficient samples, as determined by the derivations in Appendix \ref{sec:samplecomplexity} (note that omitting mutual information terms does not impact the validity of the upper bound in Eq.~\eqref{eq:chowliu}). Our final estimates of an upper bound for each $H(X_S)$ were accurate to within 1.5 bits with 90\% confidence.

Some aspects of our experimental methodology have been omitted from the description in this section for the sake clarity. Complete details can be found in Appendices \ref{sec:observable}, \ref{sec:special}, \ref{sec:greedy} and \ref{sec:monotonicity}.

\section{Results}\label{section:results}

\begin{figure}[!h]
  \centering
  \includegraphics[width=0.5\linewidth, trim={0 1.5cm 0 0}]{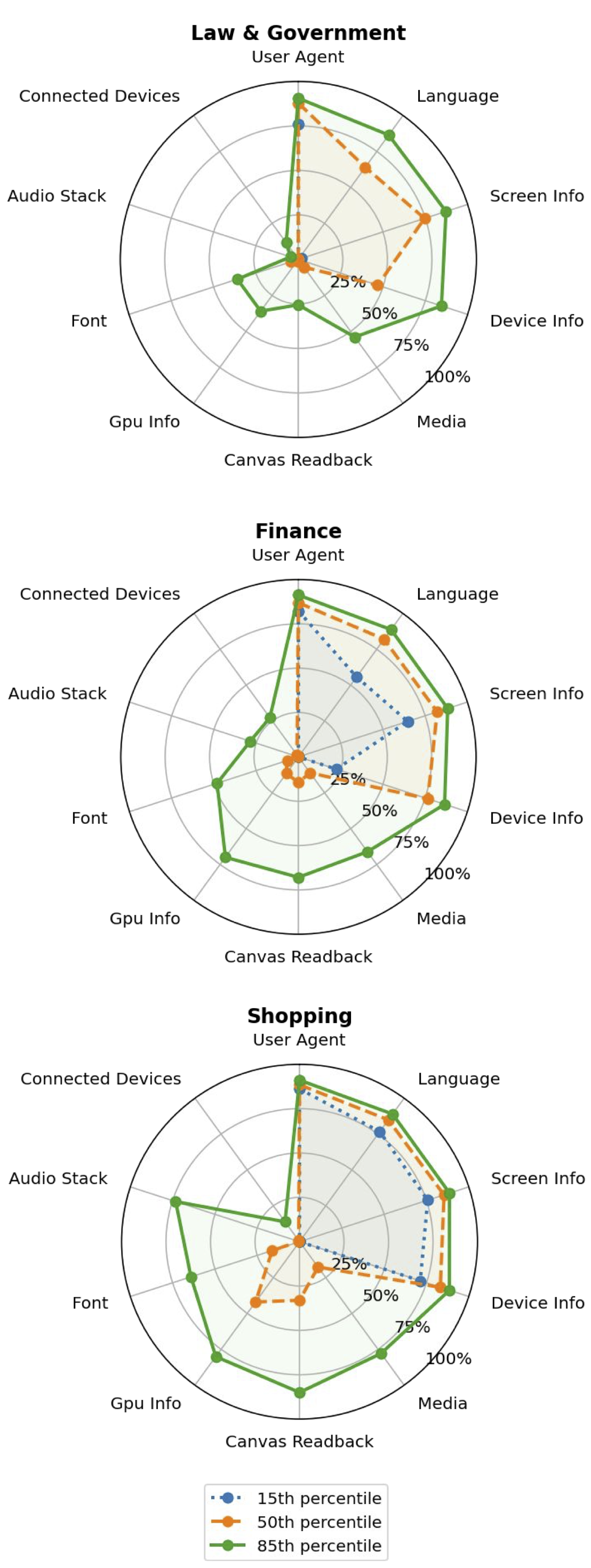}
  \caption{Surface call frequency for several website verticals}
  \label{figure:radars}
\end{figure}

The findings of our measurement study can be grouped into several levels of analysis. First, we calculated how often different families of surfaces (fonts, media, etc) are called by websites, which provided a high-level understanding of how Web API functionality is used across the web. Next, we estimated the entropy of individual surfaces, and the correlation and mutual information between pairs of surfaces, and clustered together surfaces that expose similar identifying information. Finally, we estimated the joint entropy of all the surfaces observed during each client session on each website. This allowed us to plot the entropy distribution for all sessions, validate entropy as a fingerprinting metric, and evaluate the effectiveness of anti-fingerprinting methods based on blocking scripts.

\subsection{Surface call frequency}\label{section:call-frequency}

We partitioned all surfaces into 10 families, and mapped each website to a vertical using a procedure similar to the one used for the Topics API \citep{topics}. We defined, for each surface family $f$, a threshold $t_f$, and we calculated the percentage of sessions in which at least $t_f$ different surfaces from surface family $f$ was accessed on a website (recall the definition of a session from Section \ref{sec:notation}). In the radar charts in Figure~\ref{figure:radars} we plot several percentiles of these percentages for a few website verticals. In each of these charts, the color indicates the percentage of websites, while the radial dimension indicates the percentage of sessions. For example, Figure~\ref{figure:radars}(c) shows that 15\% of shopping websites access screen information in 75\% of sessions.

Observe that website behavior can significantly vary across verticals. For example, GPU information is much more commonly accessed by shopping websites than law \& government websites, while nearly all websites access information about the user agent.

\subsection{Surface clustering}\label{section:clustering}

Generally speaking, users are exposed to fingerprinting risk whenever a website calls many high entropy surfaces during a browsing session. But this risk is reduced if the surfaces are correlated with each other, since in that case their joint entropy will be less than the sum of their individual entropies (see Eq.~\eqref{eq:chowliu}). We calculated all pairwise correlations between 60 `core' surfaces that were selected for their popularity, stability and non-triviality. Specifically, a surface was deemed to be a core surface if over a 1 month period: (1) it was called by at least 500 clients, (2) at least 95\% of clients reported only one value for the surface, (3) the surface's entropy was at least 0.1 bits, and (4) the number of clients reporting the surface did not vary significantly each day. The last requirement was intended to filter out surfaces that are only ephemerally popular, such as a canvas drawing function that is temporarily invoked with an unusual set of arguments by a popular website (recall from Section \ref{sec:notation} that a surface is a Web API  combined with its input arguments).

Defining the correlation between two discrete-valued random variables, such as a pair of surfaces, is not as straightforward as for real-valued random variables, for which metrics like Pearson correlation are typically used. Our approach is based on the observation that two random variables are independent if and only if their joint distribution is equal to the product of their marginal distributions. We applied a method from \citet{chan2014optimal} to estimate the distance between these distributions, as measured by total variation. Figure \ref{figure:clusters} reports the results for all pairs of core surfaces, with black squares indicating correlation (total variation is at least 0.05), red squares indicating independence (total variation is zero), and an empty square if there is not enough data to distinguish between these two cases with confidence at least 90\%. The entropy of each surface is shown on the right-hand side of Figure~\ref{figure:clusters}. An advantage of our definition of correlation is that the sample requirements for estimating it are fairly modest.

\begin{figure}[t]
  \centering
  \includegraphics[width=0.9\linewidth, trim={0 1.5cm 0 0}]{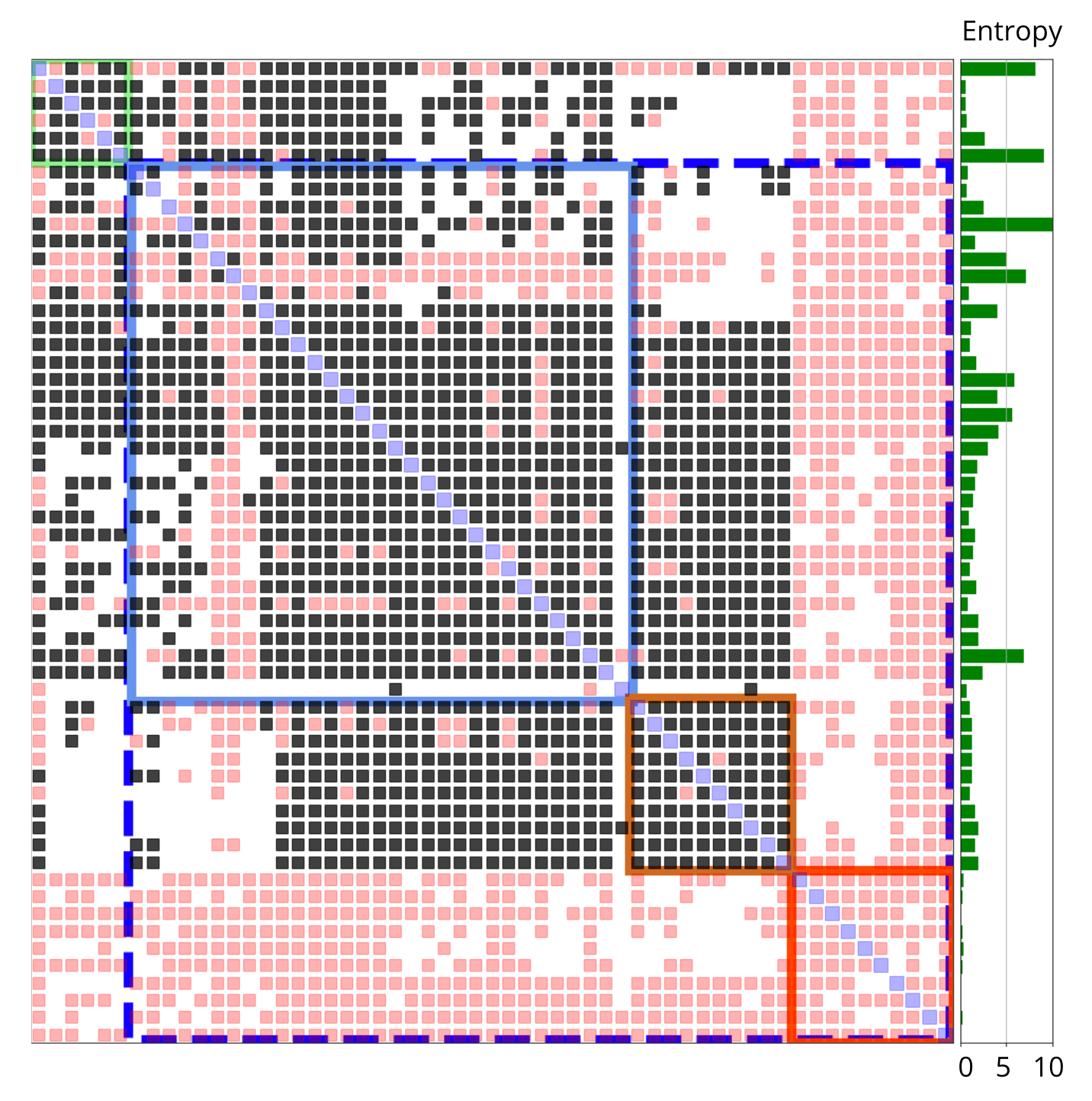}
  \caption{Surface clustering determined by pairwise correlations. Clusters are outlined by colored rectangles.}
  \label{figure:clusters}
\end{figure}

We applied the Cuthill–McKee algorithm \citep{cuthill1969reducing} to reorder the surfaces so that correlated surfaces tend to be adjacent to one another in the figure, and after grouping adjacent and related surfaces together, the following surface clustering emerged:

\begin{itemize}
    \item {\bf Device properties} \textit{(dashed blue rectangle)}. This cluster includes basic parameters of the device, such as the screen width and height, and is divided into three sub-clusters:
    \begin{itemize}
        \item {\bf Core device properties} \textit{(light blue rectangle)}. Includes language, pixel ratio, and screen width and height.
        \item {\bf WebGL parameters} \textit{(brown rectangle)}. Includes the WebGL parameters.
        \item {\bf Independent properties} \textit{(red rectangle)}. Although these surfaces are not correlated with any other surface, they have very low entropy, and so they pose minimal fingerprinting risk.
    \end{itemize}
    \item {\bf User agent cluster} \textit{(green rectangle)}. Surfaces which return information about the user agent. This cluster also has some cross-cluster correlation with device properties.
\end{itemize}

Expanding this analysis to all surfaces also reveals a natural clustering. Figure~\ref{figure:tree} is a single-linkage clustering \citep{everitt2011cluster} of all the surfaces. In the figure, each node represents a surface, a node’s size is proportional to the surface’s entropy, and the weight of the edge between two nodes is the mutual information between the surfaces. Observing many surfaces from the same cluster will, to a first approximation, expose a user to less fingerprinting risk than observing the same number of surfaces from different clusters.

\begin{figure}[t]
  \centering
  \includegraphics[width=0.9\linewidth, trim={0 1.5cm 0 0}]{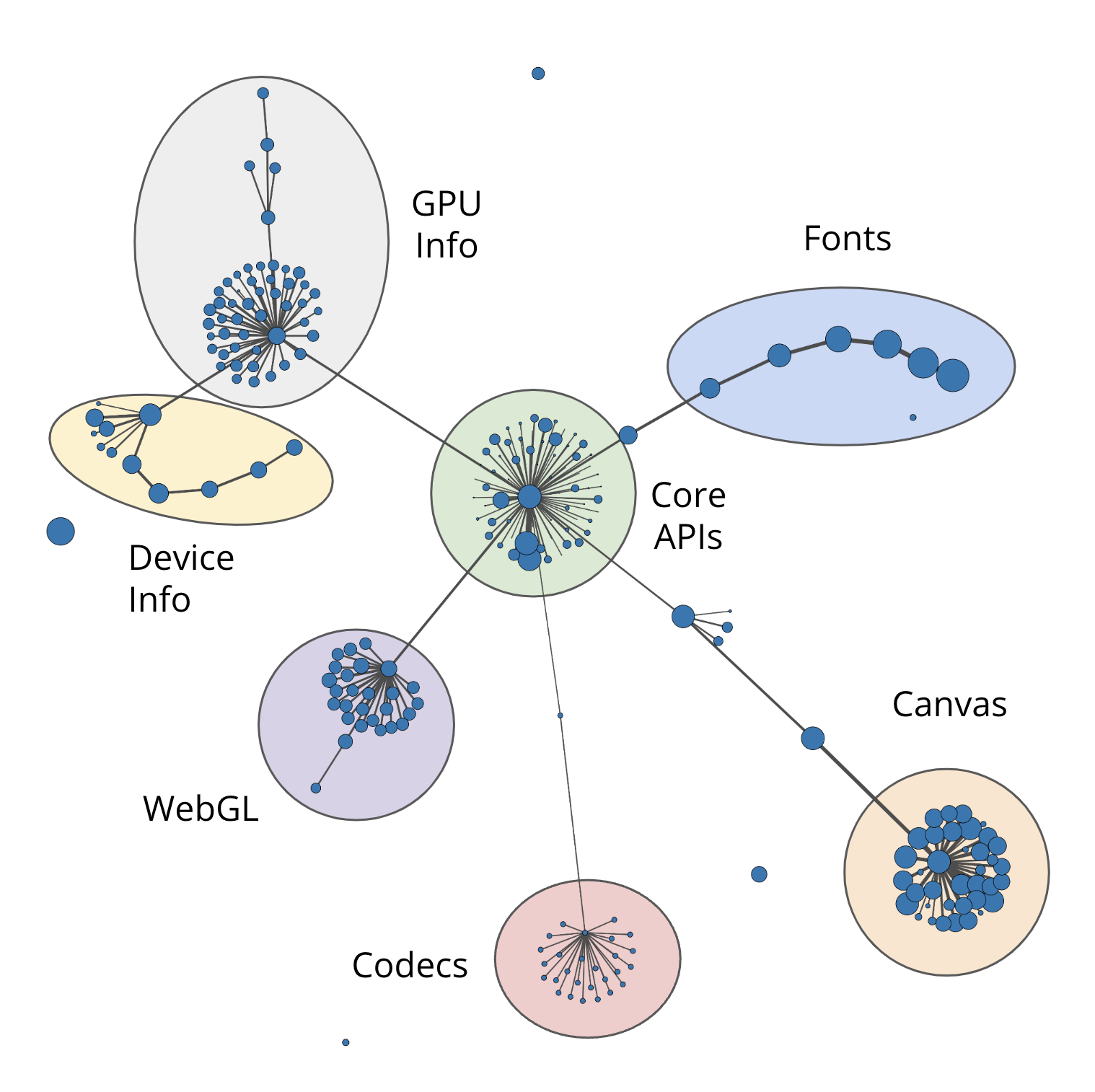}
  \caption{Surface clustering determined by mutual information. Clusters are indicated by ovals.}
  \label{figure:tree}
\end{figure}

\subsection{Entropy of sessions}\label{section:joint-entropy}

\begin{figure*}[t]
  \centering
  \includegraphics[width=0.8\linewidth, trim={0 1.5cm 0 0}]{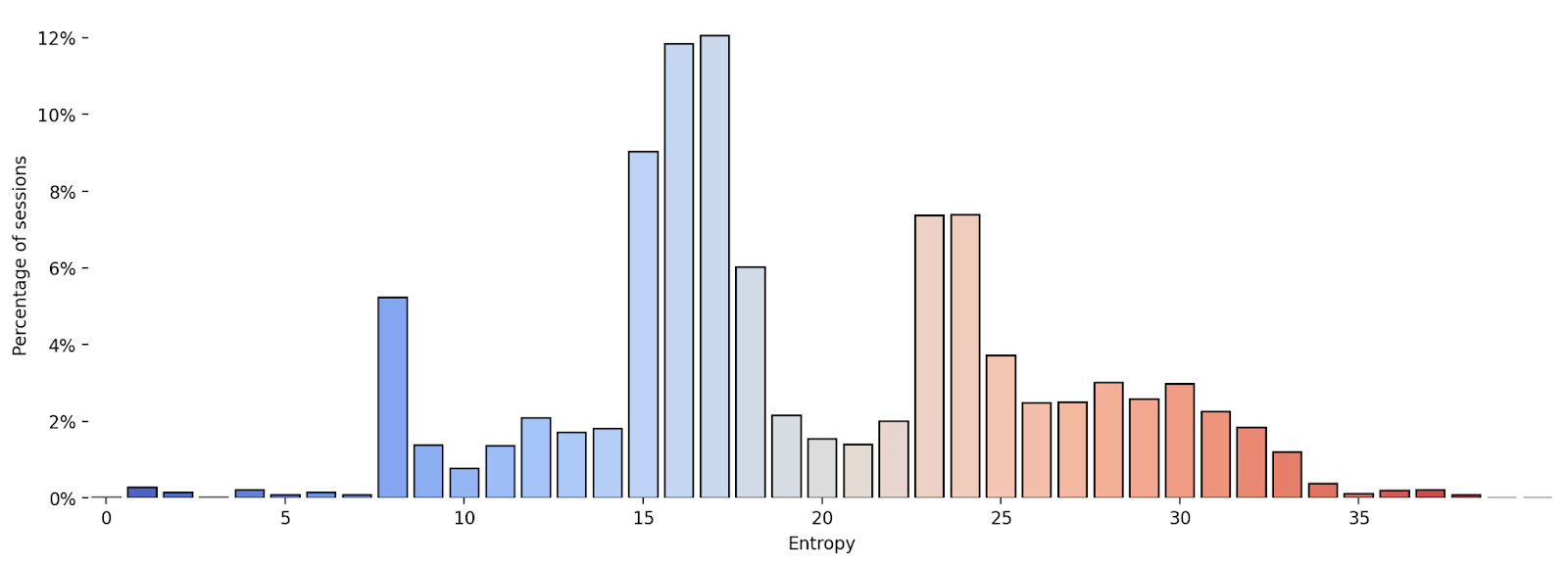}
  \caption{Entropy distribution of the web}
  \label{figure:web}
\end{figure*}

\begin{figure}[t]
  \centering
  \includegraphics[width=0.9\linewidth, trim={0 1.5cm 0 0}]{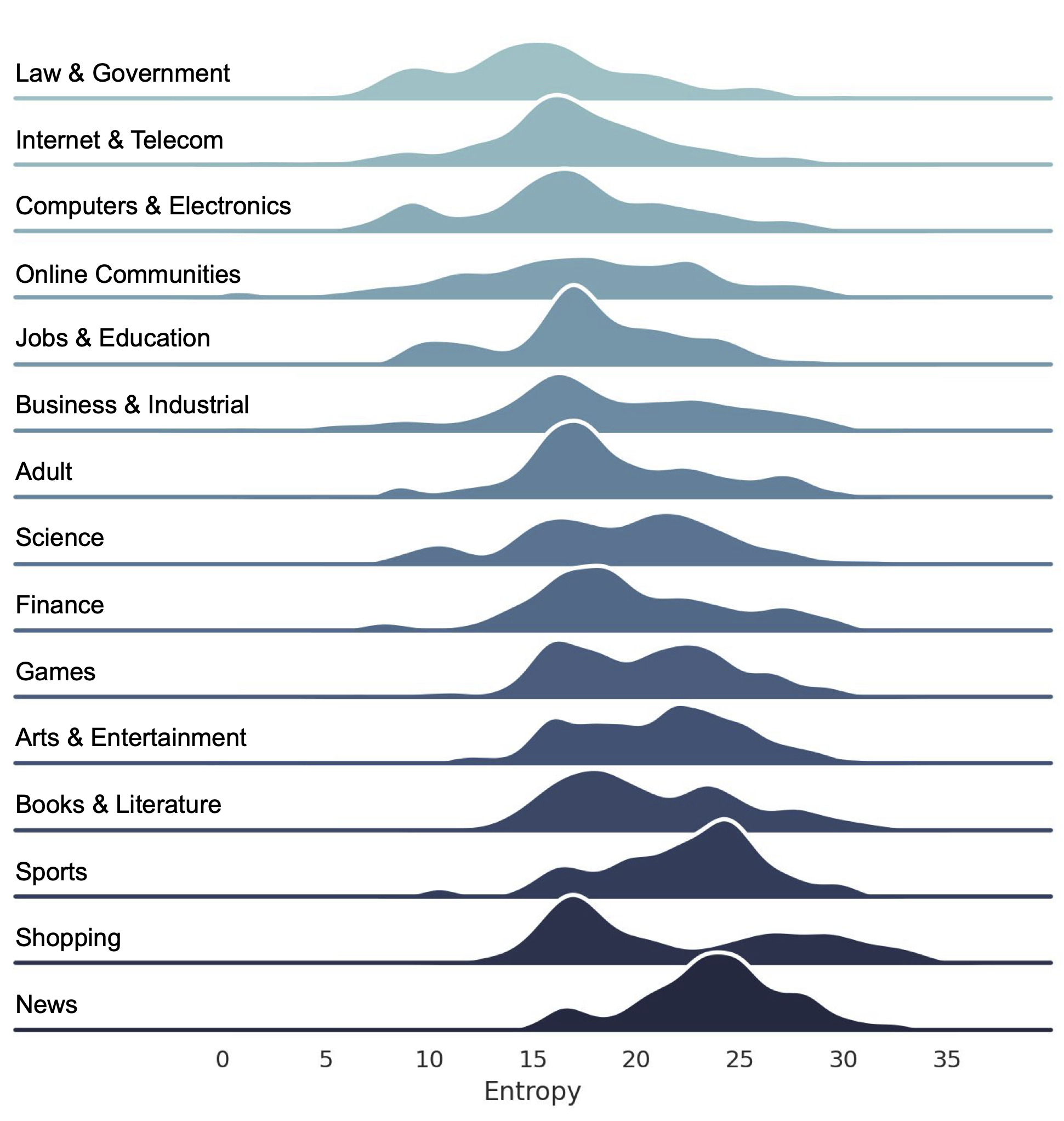}
  \caption{Per-vertical distributions of session entropy, sorted in increasing order of mean of the distribution.}
  \label{figure:verticals}
\end{figure}

Computing the correlation structure of surfaces also enabled us to estimate the joint entropy of all surfaces observed during any given session (recall the definition of a session from Section \ref{sec:notation}). Specifically, if $S$ is the set of surfaces observed during a session, then we estimate $H(X_S)$ using the Chow-Liu decomposition in Eq.~\eqref{eq:chowliu}. Since these estimates account for the mutual information between pairs of surfaces, they are more realistic than the estimates provided by any previous measurement study. Figure~\ref{figure:web} shows the distribution of session entropy for the Tranco Top 10,000 websites \citep{pochat2018tranco}. Sessions from the same website contribute to multiple buckets of this distribution, but the total contribution is normalized to be the same as all other websites. Note that the entropy for nearly all websites is either far below or far above 20 bits, suggesting that the typical website can be classified as operating in either a high- or low-entropy mode.\footnote{For context, note that if surface values are uniformly distributed in a population of 1 billion users, then (roughly) 10 bits of entropy corresponds to every user sharing their values with 1 million users, 20 bits to 1 thousand users, and 30 bits to every user being uniquely identified.}

Figure~\ref{figure:verticals} decomposes the session entropy distribution into several distributions according to each website's vertical. The distributions are normalized in the same way as the overall distribution in Figure \ref{figure:web}, but together they cover only 7187 websites instead of 10,000, since not all websites are assigned to a vertical, and we dropped verticals with fewer than 50 websites. Examining these distributions reveals clear differences between many of the verticals. For example, news and shopping websites have significantly higher entropy consumption than law \& government websites.

\begin{figure}[t]
  \centering
  \includegraphics[width=0.9\linewidth, trim={0 1.5cm 0 0}]{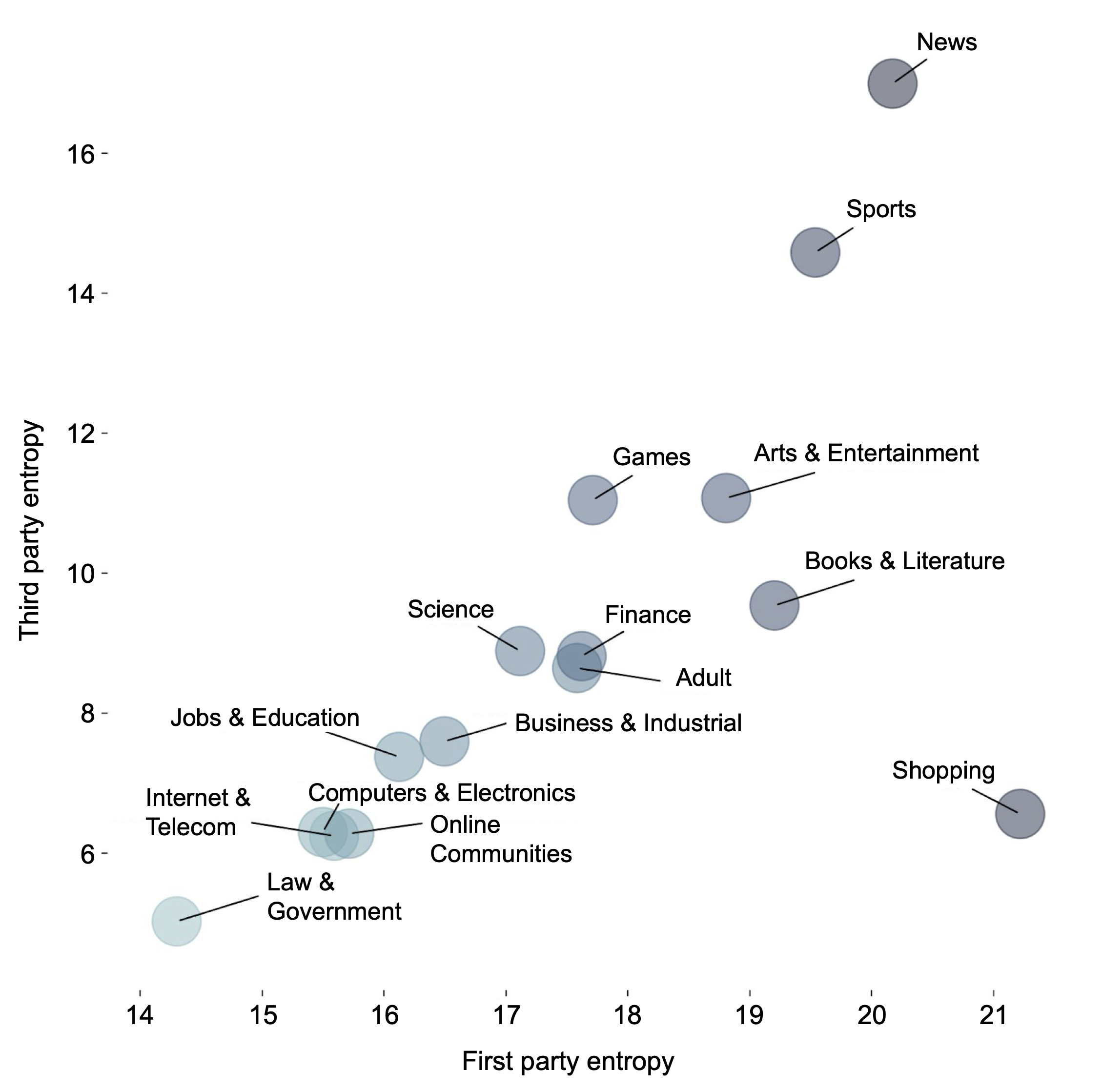}
  \caption{Average third-party versus first-party entropy}
  \label{figure:1p3p}
\end{figure}

We can obtain additional insight into differences among verticals by examining whether entropy is exposed in a first- or third-party context, which depends on the domain of the document that observes the surface values. A document in an iframe is first-party if it is same-site (i.e. shares the same registrable domain) with the top level URL (i.e.~the URL which the user is visiting), and is third-party otherwise.\footnote{Note that, even in a first-party context, the information collected via high-entropy APIs can be sent to a third-party (and even directly collected by a third-party script loaded in the first-party context), so this distinction is an approximation.}
Figure~\ref{figure:1p3p} plots the average first- and third-party entropy per vertical. A typical reason for embedding third-party iframes is to run advertising-related code, and so the differences between the entropy consumption patterns of verticals may be attributable to differences in business models. For example, news websites have the highest third-party entropy, possibly because of their heavy reliance on third-party advertising, while shopping websites have very low third-party entropy, and also seldom display third-party ads. Similarly, users of online communities (which include social networks) are often logged in, which decreases the need to fingerprint them, and this may explain why online community websites consume less overall entropy than news websites.

\subsection{Entropy as a fingerprinting metric}\label{section:validating-entropy}

While entropy is widely accepted as a proxy for fingerprinting risk, there has been limited data-driven evaluation of its suitability for that purpose. \citet{acar2014web}, based on earlier work by \citet{mowery2012pixel}, found that many fingerprinting libraries write specific and unusual strings to the canvas, because the way this string is rendered on a user’s device can be very identifying. Since these Web API function call `signatures' constitute clear evidence of known fingerprinting behavior, we checked whether they are associated with higher entropy. Figure~\ref{figure:fingerprintjs} is a box plot of the percentage of sessions with a fingerprinting signature versus their average session entropy. We observe a strongly positive correlation between the presence of signatures and entropy.

\begin{figure}[t]
  \centering
  \includegraphics[width=\linewidth]{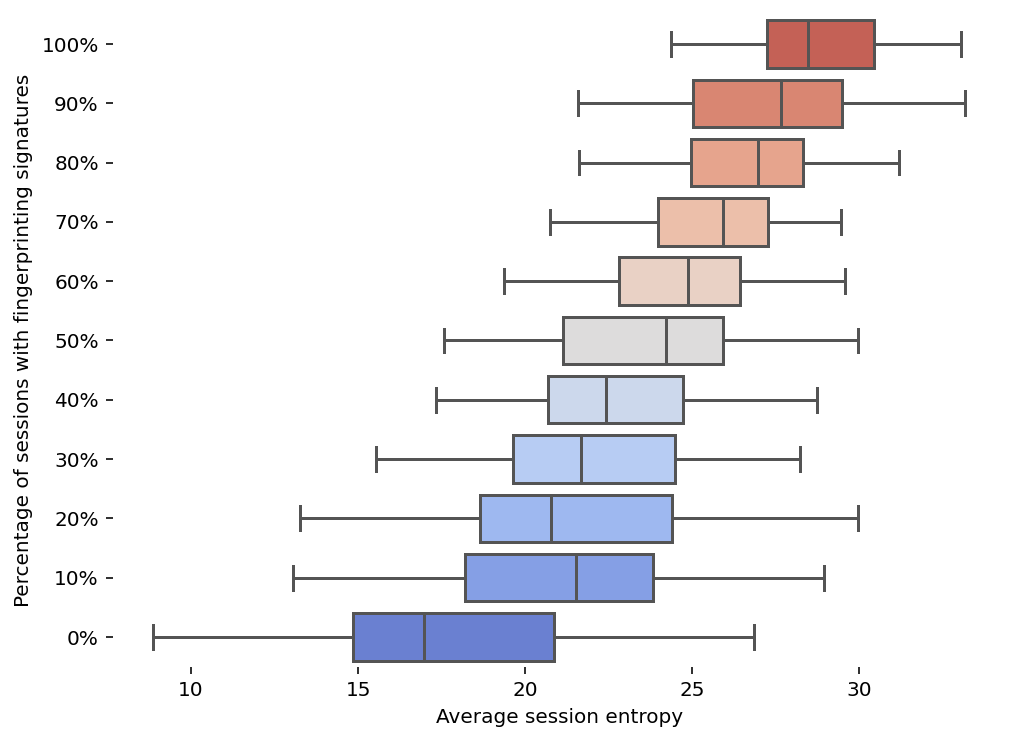}
  \caption{Fingerprinting signatures versus entropy}
  \label{figure:fingerprintjs}
\end{figure}

\subsection{Evaluating anti-fingerprinting methods}\label{section:chrome-blocklist}

\begin{figure}[t]
  \centering
  \includegraphics[width=\linewidth]{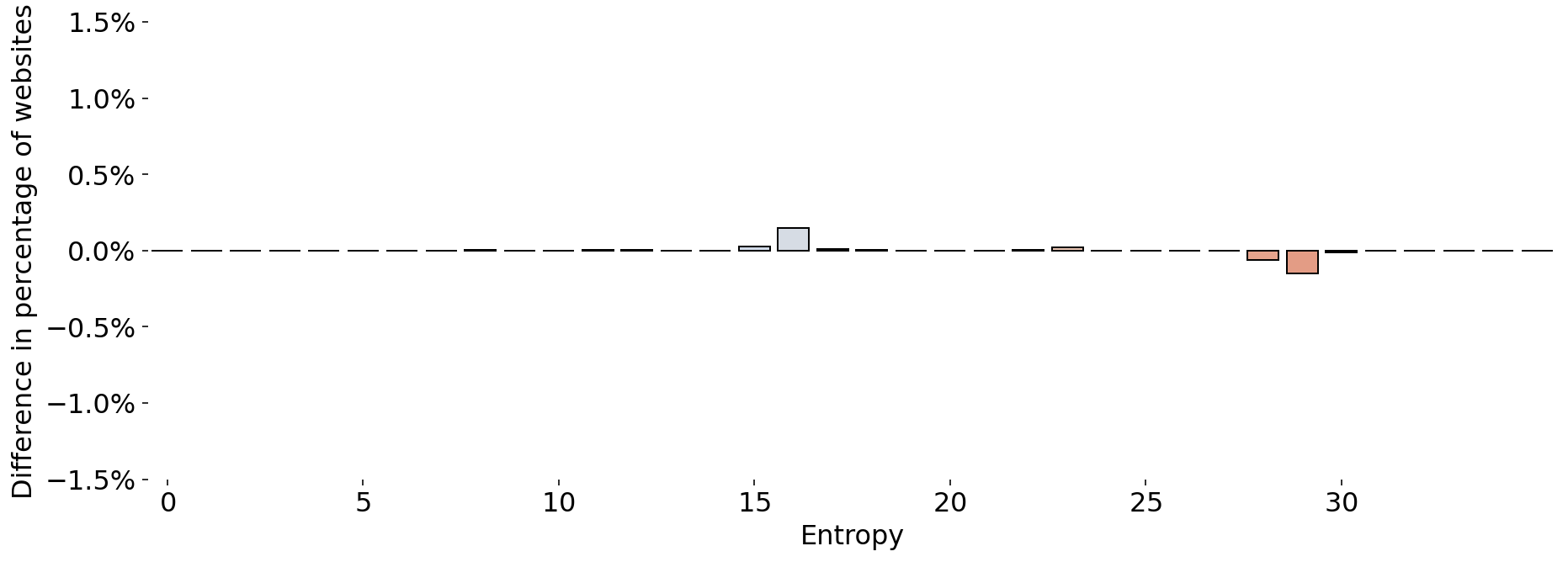}
  \caption{Change in entropy distribution when blocking scripts from known invasive fingerprinters \citep{disconnect}.}
  \label{figure:delta-blocking-disconnect-domains}
\end{figure}

\begin{figure}[t]
  \centering
  \includegraphics[width=\linewidth]{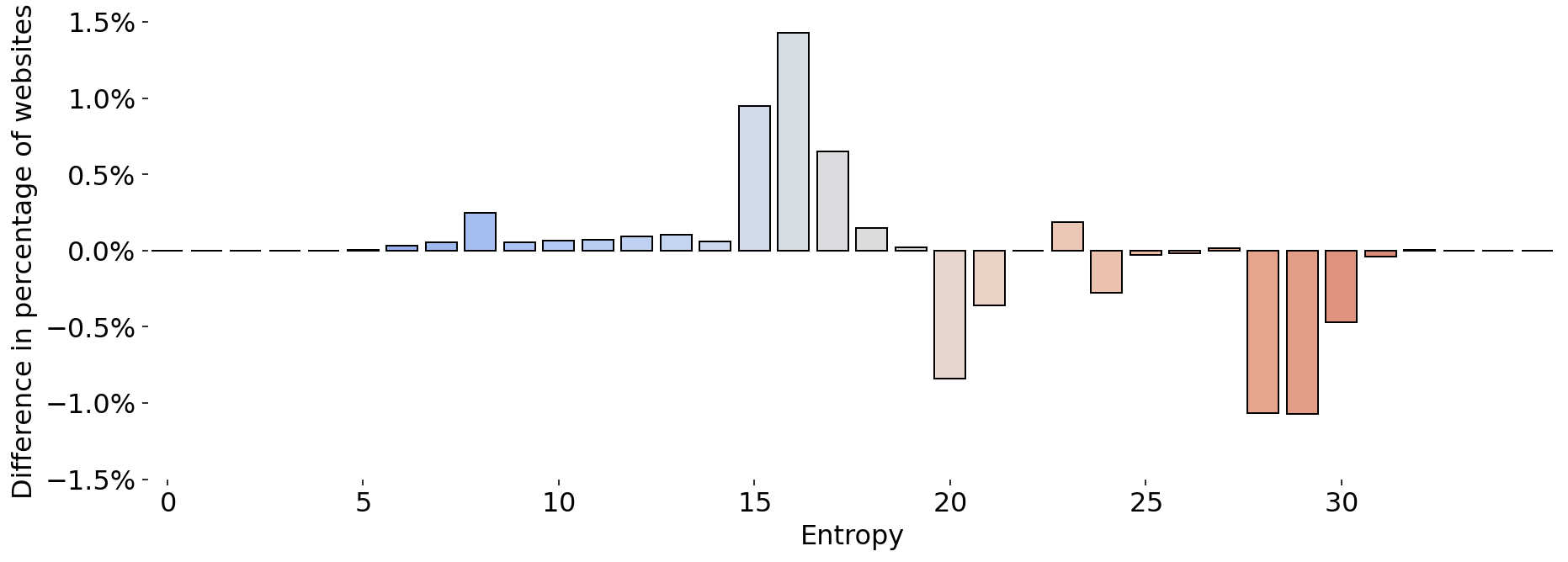}
  \caption{Change in entropy distribution when blocking scripts containing a fingerprinting signature \citep{acar2014web}.}
  \label{figure:delta-blocking-signatures}
\end{figure}

One approach to mitigating fingerprinting is for the browser to block the execution of scripts that are suspected of fingerprinting. Any such mitigation strategy must navigate a trade-off between limiting the amount of identifying information revealed to websites on the one hand, and disrupting users' experience of websites on the other hand. The results of our entropy measurement study can help quantify this trade-off. We instrumented a crawler to visit 6.4 million websites and record the arguments passed to every function called by every executed script, as well as the domain that hosts each script included in the page. We also calculated the entropy of the information that each website requested from the crawler during its visit.

We next simulated the impact of two fingerprinting mitigation strategies on this entropy distribution. Figure~\ref{figure:delta-blocking-disconnect-domains} shows the effect of blocking scripts that are hosted by a domain on the Disconnect list of invasive fingerprinters, which is used by some browsers as the basis of an anti-fingerprinting solution, and has been manually curated to minimize impact on valid website behavior \citep{disconnect}. Figure~\ref{figure:delta-blocking-signatures} shows the effect of the much more aggressive intervention of blocking every script containing at least one of the fingerprinting signatures found by \citet{acar2014web} (see Section \ref{section:validating-entropy} for further discussion of these signatures), irrespective of its impact on user experience. Note that Figures~\ref{figure:delta-blocking-disconnect-domains} and \ref{figure:delta-blocking-signatures} show the change in the entropy distribution relative to no intervention. Also, unlike in previous plots, the figures are not based on actual client sessions, but instead on crawler data, since for privacy reasons we did not record the detailed script execution history of any client belonging to a real user. We see that both interventions shift the entropy distribution to lower entropy, but the effect of the more aggressive intervention is much more significant. Browser developers can use this information to design their preferred intervention.

\section{Caveats}

Our results should be understood in light of the several limitations that had to be imposed on our measurement study, some for practical reasons, and others more fundamental. For the sake of both privacy and performance, we intentionally restricted the scope of our data collection. We examined only a subset of all surfaces, included only a small percentage of Chrome clients in the study, and capped the amount of information reported per client. All of these limits may have biased our results in ways that are difficult to quantify. For example, we only collected data from Chrome users who were logged in and consented to sharing their data with Google, a user population that may differ significantly from the overall population of the web.

Our study focused on measuring entropy, and while entropy is a reasonable proxy for fingerprinting risk (or, in other words, for the potential for fingerprinting), a website that collects high-entropy information is not necessarily a cause for user concern. The website may not be using the information for constructing fingerprints, or may be using the fingerprints for a benign purpose, such as preventing fraud or abuse. On the other hand, a website that collects low-entropy information may nonetheless be engaged in covert cross-site tracking via a non-fingerprinting method, such as link decoration or pixel tags.

Even our entropy estimates may be distorted, since they are based on the Chow-Liu decomposition, which provides an upper bound on entropy, but not a lower bound. As we discussed in Section \ref{sec:entropy}, this limitation is somewhat unavoidable, as computing more accurate entropy estimates would require a prohibitive number of samples. More technically, using a decomposition to estimate entropy while collecting data in a passive manner may have biased our estimates in a subtle way, since each term in the decomposition might have been estimated on a different set of clients.

\section{Conclusions}\label{section:conclusions}

Browser developers have proposed or implemented several mechanisms that try to address browser fingerprinting, including masking, fuzzing or coarsening the information the browser provides to websites (e.g., Brave, Firefox and Safari) and offering users the option to block domains known to engage in fingerprinting (e.g., Firefox and Edge). However, it remains unclear the extent to which such approaches can limit fingerprinting risks in-the-wild. Our measurement study is, to the best of our knowledge, the first effort to realistically quantify the amount of user-identifying information collected by websites that account for a large fraction of traffic on the Internet. The results of our study are useful for evaluating the impact of anti-fingerprinting methods, as well as for guiding the design of new anti-fingerprinting proposals. 

\newpage

\bibliographystyle{plainnat}
\bibliography{sample-base}

\newpage

\appendix

\section{Sample complexity of entropy estimation}
\label{sec:samplecomplexity}

The Chow-Liu (CL) upper bound depends on the entropy of marginal distributions $H(X_s)$ and on the mutual information between marginal distributions $I(X_s; X_{s'})$. The mutual information for discrete distributions can be further decomposed as
\begin{align}
I(X_s; X_{s'}) = H(X_s) + H(X_{s'}) - H(X_s, X_{s'}) \label{eq:mi_decomp}
\end{align}
which implies that computing the CL bound boils down to estimating entropy of marginal and pairwise marginal distributions. Therefore we will focus on estimating the entropy of discrete random variable $X$ over domain $\{1, \dots, k \}$ and its confidence interval which readily implies a confidence interval for mutual information and thus for the CL upper bound.

\subsection{Entropy estimation}
\label{sec:entropy_conf_int}

Assume that we are given $n$ realizations, denoted by $x_1, \dots, x_n$, from $X$ with parameters $\mathbf{p} = (p_1, \dots, p_k )$. This set of observation can be used to obtain a plug-in estimator $X^{(n)}$ of the distribution of $X$ as
\[
\widehat{p}^{(n)}_i = \frac{1}{n} \sum_{j=1}^n \mathbf{1}_{x_j = i} \enspace .
\]
In addition, the plug-in estimator can be naturally used to estimate the entropy as $H(X^{(n)})$.

\begin{theorem}\label{thm:entropy_bias}
Using the notation above, for any $\epsilon > \frac{2(k-1)}{n}$ and $n \in \mathbb{N}_+$, it holds that
\begin{align*}
\Pr \left( H(X) - H( X^{(n)} ) > \epsilon \right) & \le \exp \left(-\frac{\epsilon^2 n^2}{2\sum_{i=1}^k \log^2 p_i} \right) \\ & \quad + \exp\left(-\frac{n\epsilon}{2}\right)  \cdot \left( \frac{e\epsilon n}{2(k-1)}\right)^{k-1}
\end{align*}
and for any $\epsilon > 0$ and $n \in \mathbb{N}_+$, it holds that
\[
\Pr \left( H (X^{(n)} ) - H(X)  > \epsilon \right) \le \exp \left(-\frac{2\epsilon^2 n^2}{\sum_{i=1}^k\log^2 p_i} \right) \enspace .
\]
\end{theorem}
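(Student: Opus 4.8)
The plan is to route both inequalities through the auxiliary ``cross term'' $\widetilde H := -\sum_{i=1}^k \widehat p^{(n)}_i \log p_i$, which replaces the empirical log-probabilities appearing in $H(X^{(n)})$ by the true ones. Two identities drive everything. First, $\widetilde H$ is an average of i.i.d.\ terms, $\widetilde H = \frac1n\sum_{j=1}^n (-\log p_{x_j})$ with $\mathbb E[\widetilde H] = H(X)$, so $\widetilde H - H(X)$ is a centered empirical mean. Second,
\[
\widetilde H - H(X^{(n)}) = \sum_i \widehat p^{(n)}_i \log\frac{\widehat p^{(n)}_i}{p_i} = D(\widehat p^{(n)} \,\|\, p) \ge 0,
\]
the Kullback--Leibler divergence between the empirical and true distributions. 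Combining them yields the exact split
\[
H(X) - H(X^{(n)}) = \underbrace{\big(H(X) - \widetilde H\big)}_{\text{centered, concentrates}} + \underbrace{D(\widehat p^{(n)}\|p)}_{\ge 0,\ \text{estimator bias}}.
\]

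For the upper tail of $H(X^{(n)}) - H(X)$ (the cleaner, second inequality) I would use that $D \ge 0$ forces $H(X^{(n)}) - H(X) \le \widetilde H - H(X)$, so it suffices to bound $\Pr(\widetilde H - H(X) > \epsilon)$. Since this is an average of $n$ i.i.d.\ bounded summands $-\log p_{x_j} - H(X)$, a Hoeffding/Azuma-type bound applies; controlling the per-sample sub-Gaussian parameter by $\sum_i \log^2 p_i$ produces the single exponential term $\exp\!\big(-2\epsilon^2 n^2 / \sum_i \log^2 p_i\big)$.

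For the lower tail $H(X) - H(X^{(n)}) > \epsilon$ I would apply the split above with a union bound, allocating $\epsilon/2$ to each piece: the event implies either $H(X) - \widetilde H > \epsilon/2$ or $D(\widehat p^{(n)}\|p) > \epsilon/2$. The first probability is handled by the same Hoeffding estimate (opposite tail) with $\epsilon \to \epsilon/2$, giving exactly $\exp\!\big(-\epsilon^2 n^2 / (2\sum_i \log^2 p_i)\big)$; the factor-four change in the constant relative to the other direction is precisely the cost of the $\epsilon/2$ split. The second probability needs a divergence tail bound of the form $\Pr\big(D(\widehat p^{(n)}\|p) > t\big) \le e^{-nt}\big(ent/(k-1)\big)^{k-1}$ evaluated at $t = \epsilon/2$, which reproduces the second summand.

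The main obstacle is this divergence tail bound, as everything else is routine concentration. I would obtain it by the method of types: each empirical type $q$ satisfies $\Pr[\widehat p^{(n)} = q] \le e^{-n D(q\|p)}$, so $\Pr(D(\widehat p^{(n)}\|p) > t)$ is at most the number of types with divergence exceeding $t$ times $e^{-nt}$; bounding the type count by $\binom{n+k-1}{k-1}$ and simplifying via $(k-1)! \ge ((k-1)/e)^{k-1}$ furnishes the $\big(ent/(k-1)\big)^{k-1}$ prefactor (alternatively, dominate $D$ by the chi-squared divergence $\sum_i (\widehat p_i - p_i)^2/p_i$ and invoke a $\chi^2_{k-1}$ tail). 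The hypothesis $\epsilon > 2(k-1)/n$ is exactly what makes this a genuine decaying tail: it guarantees $t = \epsilon/2$ exceeds the $\Theta((k-1)/n)$ expected value of $D$ (the source of the classical negative bias of the plug-in entropy estimator), so that $e^{-nt}$ dominates the polynomial type-count prefactor. The residual work is bookkeeping: verifying the per-sample variance-proxy bound by $\sum_i \log^2 p_i$ and checking that the constants survive the $\epsilon/2$ split.
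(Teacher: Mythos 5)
Your overall architecture is identical to the paper's: the cross-term split $H(X)-H(X^{(n)})=\bigl(H(X)-\widetilde H\bigr)+D(\widehat p^{(n)}\|p)$ is exactly the identity from \citet{Paninski03} that the paper's proof starts from, the $\epsilon/2$ union bound for the lower tail is the same, and so is the use of $D\ge 0$ for the upper tail. The problems are in the two tail estimates that this skeleton needs; neither is actually delivered by your sketch. The first genuine gap is the KL tail bound. The paper does not prove it --- it quotes it as Theorem 1.2 of \citet{Agrawal20} --- and your method-of-types substitute cannot reproduce it. Counting all types gives $\Pr\bigl(D(\widehat p^{(n)}\|p)>t\bigr)\le \binom{n+k-1}{k-1}e^{-nt}\le \bigl(e(n+k-1)/(k-1)\bigr)^{k-1}e^{-nt}$: the prefactor is of order $\bigl(en/(k-1)\bigr)^{k-1}$, and the crucial factor $t^{k-1}$ cannot emerge from a count of all types. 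The missing factor $t^{-(k-1)}$ is enormous exactly in the regime the theorem targets, where $t=\epsilon/2$ may be of order $(k-1)/n$: Agrawal's bound is nontrivial as soon as $\epsilon>2(k-1)/n$, whereas the type-counting bound stays $\ge 1$ until $\epsilon \gtrsim (k-1)\log\bigl(n/(k-1)\bigr)/n$. Proofs of the stated form go through the Chernoff method --- e.g.\ a moment bound of the shape $\mathbb{E}\bigl[e^{\lambda n D}\bigr]\le (1-\lambda)^{-(k-1)}$ for $\lambda\in(0,1)$, optimized at $\lambda=1-(k-1)/(nt)$ --- which is the real content of Agrawal's theorem, not a two-line count. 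Your chi-squared fallback is also not rigorous: $D\le\chi^2$ as an inequality between divergences is fine, but the statistic $n\chi^2(\widehat p^{(n)}\|p)$ is only \emph{asymptotically} $\chi^2_{k-1}$-distributed, so there is no finite-$n$ tail of that form to invoke.

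The second gap is the concentration step, which you wave through as ``bookkeeping'' but which does not check out. For the i.i.d.\ average $\widetilde H=\frac1n\sum_{j}\bigl(-\log p_{x_j}\bigr)$, Hoeffding gives $\Pr\bigl(\widetilde H-H(X)>\epsilon\bigr)\le \exp\bigl(-2n\epsilon^2/\max_{a,b}\log^2(p_a/p_b)\bigr)$ --- an exponent \emph{linear} in $n$. No choice of ``per-sample sub-Gaussian parameter'' can produce the stated $\exp\bigl(-2\epsilon^2 n^2/\sum_i\log^2 p_i\bigr)$: a per-sample proxy $\sigma^2$ yields $\exp(-cn\epsilon^2/\sigma^2)$, and an exponent quadratic in $n$ is in fact impossible for this quantity, since $\widetilde H-H(X)$ has a nondegenerate Gaussian limit at scale $n^{-1/2}$ whenever $-\log p_X$ is not constant. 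The paper reaches $n^2$ by a different application: it invokes McDiarmid (Proposition 1 of \citet{AnKo01}) with coordinate sensitivities $\log p_i/n$ indexed by the $k$ cells $i\in[k]$, so that $\sum_i c_i^2=\sum_i\log^2 p_i/n^2$, rather than indexing by the $n$ independent samples as you do. So on this step your proposal and the paper genuinely diverge, and your (standard, Antos--Kontoyiannis-style) sample-indexed argument can at best prove a version of the theorem with $n$ in place of $n^2$ and a range-squared denominator; it cannot recover the inequalities as stated.
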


\begin{proof}
Based on (4.1) of \citet{Paninski03}, we can rewrite the ML estimate of entropy as
\begin{align*}
    H( X^{(n)} ) = H(X) + \sum_{i=1}^k \left( p_i - \widehat{p}_i^{(n)} \right) \log p_i - d_{\text{KL}} \left(\widehat{\mathbf{p}}^{(n)}, \mathbf{p} \right)
\end{align*}
which implies that
\begin{align}
    \Pr & \left( H(X) - H ( X^{(n)} ) > \epsilon \right) 
     \\ 
    & = \Pr \left( d_{\text{KL}} \left(\widehat{\mathbf{p}}^{(n)}, \mathbf{p} \right) - \sum_{i=1}^k \left( p_i - \widehat{p}_i^{(n)} \right) \log p_i  > \epsilon \right) \notag \\
    & \le 
    \Pr \left( d_{\text{KL}} \left(\widehat{\mathbf{p}}^{(n)}, \mathbf{p} \right) > \epsilon/2 \right) \notag 
    \\ & \qquad \qquad + \Pr \left( \sum_{i=1}^k p_i - \widehat{p}_i^{(n)} \log p_i  < - \epsilon/2 \right) \label{eq:decomp}
\end{align}
which follows from the union bound. The first term of (\ref{eq:decomp}) can upper bound by Theorem \ref{thm:conc_kl}. 
Using McDiarmid’s inequality, see for example Proposition 1 of \citet{AnKo01}, it holds that
\[
\Pr \left( \sum_{i=1}^k p_i - \widehat{p}_i^{(n)} \log p_i  < - \epsilon/2 \right) \le \exp \left( - \frac{\epsilon^2 n^2}{2\sum_{i=1}^k \log^2 p_i} \right)
\]
since the L1 sensitivity of $\sum_{i=1}^k p_i - \widehat{p}_i^{(n)} \log p_i$ is $\frac{\log p_i }{n}$ for all $i\in [k]$ and $n$.
On the other side, we have
\[
\Pr \left( H( X^{(n)} ) - H(X)  > \epsilon \right) \le \Pr \left( \sum_{i=1}^k p_i - \widehat{p}_i^{(n)} \log p_i  > \epsilon \right)
\]
which concludes the proof.
\end{proof}

\begin{theorem}[Theorem 1.2 of \citet{Agrawal20}]\label{thm:conc_kl} 
Using the notation defined above, for all $\epsilon > \frac{k-1}{n}$, it holds that
\[
\Pr \left( d_{\text{KL}} \left(\widehat{\mathbf{p}}^{(n)}, \mathbf{p} \right) \ge \alpha \right) \le e^{-n\alpha} \left( \frac{e\alpha n}{k-1}\right)^{k-1}
\]
where 
\[
d_{\textrm{KL}} ( \mathbf{p}, \mathbf{p}' ) = \sum_{i=1}^k p_i \log \frac{p_i}{p'_i} \enspace .
\]
\end{theorem}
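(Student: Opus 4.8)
The plan is to establish this multinomial concentration bound via the \emph{method of types}. A direct Cram\'er--Chernoff argument applied to the statistic $n\, d_{\text{KL}}(\widehat{\mathbf{p}}^{(n)}, \mathbf{p})$ is awkward, because the divergence is a nonlinear functional of the empirical measure rather than a sum of i.i.d.\ terms, so its moment generating function does not factorize across samples. The method of types sidesteps this by organizing the sample space according to the empirical distribution, or \emph{type}, $q = \widehat{\mathbf{p}}^{(n)}$, which ranges over the lattice of rational distributions with common denominator $n$, and then controlling the two ingredients separately: the probability of each individual type, and the number of types in the large-deviation region.

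The first step is the standard per-type bound. For a fixed type $q$ the multinomial probability is $\Pr(\widehat{\mathbf{p}}^{(n)} = q) = \binom{n}{nq_1, \dots, nq_k}\prod_{i} p_i^{\,nq_i}$, and bounding the multinomial coefficient by $\exp(nH(q))$, with $H(q) = -\sum_i q_i \log q_i$ in nats, collapses this to $\Pr(\widehat{\mathbf{p}}^{(n)} = q) \le \exp(-n\, d_{\text{KL}}(q, \mathbf{p}))$. Summing over the large-deviation region $\{q : d_{\text{KL}}(q, \mathbf{p}) \ge \alpha\}$ and using the union bound then yields $\Pr(d_{\text{KL}}(\widehat{\mathbf{p}}^{(n)}, \mathbf{p}) \ge \alpha) \le \sum_{q : d_{\text{KL}}(q,\mathbf{p}) \ge \alpha} \exp(-n\, d_{\text{KL}}(q, \mathbf{p}))$, which is where all the work remains.

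The crux is extracting the sharp prefactor $(e\alpha n/(k-1))^{k-1}$ from this tail sum. The naive move of replacing every $d_{\text{KL}}(q,\mathbf{p})$ by the threshold $\alpha$ and multiplying by the total number of types $\binom{n+k-1}{k-1}$ gives only the weaker bound $\binom{n+k-1}{k-1}e^{-n\alpha} \sim (en/(k-1))^{k-1} e^{-n\alpha}$, which lacks the favorable $\alpha^{k-1}$ dependence and is loose for small $\alpha$. To recover the sharp factor one must keep the exponential weights $\exp(-n\, d_{\text{KL}}(q,\mathbf{p}))$ and count types shell by shell: estimate the number of lattice types whose divergence falls in a window of width $O(1/n)$ around a level $\beta$, then sum (integrate) these counts against $e^{-n\beta}$ over $\beta \ge \alpha$. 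This is essentially a lattice-point estimate for the relative-entropy balls on the probability simplex, which are locally ellipsoidal but globally curved, and pinning down the constant exactly is the hard part.

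I expect this counting step to be the main obstacle, and I would attempt it in one of two ways. The cleaner route is an inductive reduction to the binomial case $k = 2$, where the tail sum is one-dimensional and can be controlled by a direct Chernoff computation on a single coordinate; a tensorization or conditioning argument over the $k-1$ free coordinates of the simplex should then lift the bound to general $k$ and produce the $(k-1)$-fold product structure responsible for the exponent $k-1$. The hypothesis $\alpha > (k-1)/n$ enters precisely here as the regime in which the bound is informative, since at $\alpha = (k-1)/n$ the right-hand side equals $1$. Should a fully self-contained argument prove delicate, the statement may simply be invoked as Theorem~1.2 of \citet{Agrawal20}, whose proof carries out exactly this refined type count.
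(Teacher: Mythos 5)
This theorem is quoted verbatim from \citet{Agrawal20} --- the paper supplies no proof of its own, as the bracketed attribution in the theorem header indicates --- so your closing fallback of simply invoking Theorem~1.2 of \citet{Agrawal20} is precisely the paper's treatment of this statement. Your method-of-types sketch is a sound reconstruction of how such a bound is established (the per-type bound $\Pr(\widehat{\mathbf{p}}^{(n)} = q) \le \exp(-n\, d_{\text{KL}}(q, \mathbf{p}))$, the diagnosis that a naive count over all $\binom{n+k-1}{k-1}$ types loses the $\alpha^{k-1}$ factor, and the check that the right-hand side equals $1$ at $\alpha = (k-1)/n$ are all correct), but the refined shell-counting step that you yourself flag as the crux is left open, so on its own the sketch is not a complete proof; since the paper imports the result as a black box, that gap is immaterial for the comparison at hand.
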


One can reformulate Theorem \ref{thm:entropy_bias} as follows.
\begin{corollary}\label{corr:entropy_est}
For any $n>0$ and $\delta\in (0,1]$, it holds with probability at least $1-\delta$ that 
\[
H(X) - H( X^{(n)} ) \le \log \left(1+ \frac{k-1}{n} \right) + \sqrt{\frac{2\log (2/\delta) \log n}{n}}
\]
and
\[
H( X^{(n)} )- H(X) \le \sqrt{\frac{2\log (2/\delta) \log n}{n}}
\]
\end{corollary}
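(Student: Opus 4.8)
The plan is to invert the two concentration bounds of Theorem~\ref{thm:entropy_bias}: for a prescribed failure probability $\delta$, I solve each tail bound for the largest deviation $\epsilon$ that keeps the right-hand side below $\delta$, and then simplify the resulting expression into the stated closed form, after checking that the chosen $\epsilon$ satisfies the validity conditions (e.g.\ $\epsilon > 2(k-1)/n$) of the theorem. The two directions are asymmetric, so I would treat them separately. For the overestimation direction (the second inequality), the theorem gives a single Gaussian-type tail $\exp(-2\epsilon^2 n^2/\sum_i \log^2 p_i)$; setting this equal to $\delta$ and solving yields $\epsilon = \sqrt{\log(1/\delta)\sum_i \log^2 p_i/(2n^2)}$, and after the key bound $\sum_{i=1}^k \log^2 p_i \le n\log n$ (discussed below) this is at most $\sqrt{\frac{2\log(2/\delta)\log n}{n}}$ — in fact strictly smaller, so the stated bound holds with room to spare and I would simply quote it for symmetry with the first inequality.

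For the underestimation direction (the first inequality), the tail bound is a sum of two terms, which I would control at level $\delta/2$ each via a union bound. The first (McDiarmid/Gaussian) term, set to $\delta/2$ and combined with $\sum_i \log^2 p_i \le n\log n$, produces exactly the $\sqrt{\frac{2\log(2/\delta)\log n}{n}}$ contribution. The second term is the KL-divergence tail from Theorem~\ref{thm:conc_kl} and encodes the (negative) bias of the plug-in estimator. Here I would not invert the tail directly, since that would produce a contribution of order $(k-1)/n$, which is strictly larger than the claimed $\log(1+(k-1)/n)$. Instead I would read off the bias scale: $\log(1 + (k-1)/n)$ is the classical bound on the expected divergence $\mathbb{E}[d_{\text{KL}}(\widehat{\mathbf{p}}^{(n)}, \mathbf{p})]$, i.e.\ on $H(X) - \mathbb{E}[H(X^{(n)})]$, which is precisely the regime in which Theorem~\ref{thm:conc_kl} becomes effective. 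Adding this bias term to the fluctuation term then gives the first inequality.

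I expect the main obstacle to be the bound $\sum_{i=1}^k \log^2 p_i \le n\log n$, which is what converts the distribution-dependent variance proxy appearing in Theorem~\ref{thm:entropy_bias} into a distribution-free confidence radius. This quantity is genuinely unbounded in general — a single atom with $p_i \to 0$ sends $\log^2 p_i \to \infty$ — so the step is not a routine calculation. The way I would handle it is to argue that symbols with $p_i$ much smaller than $1/n$ are almost never observed and hence cannot influence the plug-in estimate, so the effective support has size $O(n)$ with $|\log p_i| = O(\log n)$; a dyadic accounting of the symbols by probability scale then controls the sum up to logarithmic factors.

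The subtle point, and where I would expect most of the care to be needed, is reconciling the exact power of $\log n$. The unweighted worst-case accounting above naturally produces $O(n\log^2 n)$, whereas the variance-based quantity $\sum_i p_i \log^2 p_i = O(\log^2 n)$ (which would arise from a Bernstein-type rather than a worst-case bounded-difference argument) is tighter by a full factor of $n$ and directly matches the $\log n$ inside the square root. Tightening the bounded-difference step of Theorem~\ref{thm:entropy_bias} to this weighted proxy, or otherwise justifying the replacement of $\sum_i \log^2 p_i$ by $n\log n$, is therefore the crux that makes the stated constants come out exactly, and it is the part I would verify most carefully before claiming the closed-form confidence interval.
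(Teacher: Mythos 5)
Your overall plan (invert the two tails of Theorem~\ref{thm:entropy_bias}) is exactly what the paper intends --- in fact the paper offers no proof at all, introducing the corollary only with ``One can reformulate Theorem~\ref{thm:entropy_bias} as follows'' --- and you have correctly located the two places where pure inversion breaks down: the distribution-dependent quantity $\sum_{i=1}^k \log^2 p_i$ must become a distribution-free $O(n\log n)$, and inverting the KL tail of Theorem~\ref{thm:conc_kl} yields a term of order $\frac{k-1}{n} + \frac{\log(2/\delta)}{n}$ rather than $\log\left(1+\frac{k-1}{n}\right)$ (moreover Theorem~\ref{thm:entropy_bias} only applies for $\epsilon > 2(k-1)/n$, a threshold the corollary's bound can sit below). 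However, both of your proposed repairs have genuine gaps. First, $\sum_{i=1}^k \log^2 p_i \le n\log n$ is false in general, as you note, and your ``effective support'' / dyadic repair cannot rescue it \emph{within Theorem~\ref{thm:entropy_bias} as stated}: the McDiarmid sensitivity there involves every atom of $\mathbf{p}$, observed or not, so a single atom with $p_i = e^{-n^2}$ makes the Gaussian factor vacuous even though that atom essentially never appears in the sample; worst-case bounded differences do not care about likelihood, so exploiting rarity requires redoing the concentration argument (truncation plus Bernstein, or McDiarmid applied to a different function), not post-processing the stated bound. Second, substituting $\mathbb{E}\left[d_{\text{KL}}\left(\widehat{\mathbf{p}}^{(n)}, \mathbf{p}\right)\right] \le \log\left(1+\frac{k-1}{n}\right)$ for the random variable $d_{\text{KL}}$ inside a tail bound is invalid: in the decomposition $H(X) - H(X^{(n)}) = d_{\text{KL}}\left(\widehat{\mathbf{p}}^{(n)}, \mathbf{p}\right) - \sum_i \left(p_i - \widehat{p}_i^{(n)}\right)\log p_i$ you need a high-probability upper bound on $d_{\text{KL}}$, and its upward fluctuation around its mean is exactly what you are dropping.

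Both gaps close simultaneously if you change the decomposition to
\[
H(X) - H(X^{(n)}) = \left(H(X) - \mathbb{E}\left[H(X^{(n)})\right]\right) + \left(\mathbb{E}\left[H(X^{(n)})\right] - H(X^{(n)})\right).
\]
The first summand is \emph{deterministic} and equals $\mathbb{E}\left[d_{\text{KL}}\left(\widehat{\mathbf{p}}^{(n)}, \mathbf{p}\right)\right] \le \log\left(1+\frac{k-1}{n}\right)$ (the classical bias bound from \citet{Paninski03} that you cite), so it needs no tail bound at all. The second summand is the deviation of the plug-in entropy from its own mean, which is controlled by applying McDiarmid directly to $H(X^{(n)})$: its bounded differences are at most roughly $2\log n/n$ for \emph{any} $\mathbf{p}$ --- this is precisely Proposition 1 of \citet{AnKo01}, which the paper already invokes --- giving a fluctuation radius of order $\log n\sqrt{2\log(2/\delta)/n}$ with probability $1-\delta$; the same term bounds $H(X^{(n)}) - H(X)$ since the bias is non-positive. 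Note, however, that this derivation (very likely the authors' intended one) produces $\log^2 n$ inside the square root, not $\log n$: your suspicion about the power of the logarithm is well-founded, and the corollary as printed appears to be off by that factor, or else relies on the unjustifiable substitution $\sum_i \log^2 p_i \le n\log n$. So your outline assembles the right ingredients and diagnoses the real obstructions, but as written it is not a proof, and no argument along these lines will deliver the stated first power of $\log n$ distribution-free.
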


\subsection{Estimating mutual information}
\label{sec:mi_conf_int}

Mutual information between discrete random variables can be decomposed into entropy values as it is given in (\ref{eq:mi_decomp}). This result and Corollary \ref{corr:entropy_est} readily implies the following result.

\begin{theorem}\label{thm:mutual_info}
Suppose a joint distribution $(X_1, X_2)$ over domain $[k_1] \times [k_2]$ for which the marginal distributions $X_1$ and $X_2$ have parameters $\mathbf{p}_1$ and $\mathbf{p}_2$. 
Then, for any $\delta > 0$ and $n \in \mathbb{N}_+$, it holds that
\begin{align*}
I(X_1; X_2) & -  I\Big( X_1^{(n)} ; X_{2}^{(n)} \Big)  \le \\ & \log \left(1+ \frac{k_1-1}{n} \right) + \log \left(1+ \frac{k_2-1}{n} \right) + 3\sqrt{\frac{6\log 2/\delta \log n}{n}}
\end{align*}
and 
\begin{align*}
I\Big( X_1^{(n)} ; X_{2}^{(n)} \Big) - I(X_1; X_2 &)  \le \\ & \log \left(1+ \frac{k_1\cdot k_2 -1}{n} \right)   + 3\sqrt{\frac{6\log 2/\delta \log n}{n}}
\end{align*}
\end{theorem}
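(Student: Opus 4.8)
The plan is to reduce both inequalities to the single-variable entropy confidence interval of Corollary~\ref{corr:entropy_est}, applied three times, via the decomposition \eqref{eq:mi_decomp}. Writing the gap in mutual information as a signed sum of three entropy gaps,
\begin{align*}
I(X_1;X_2) - I\big(X_1^{(n)};X_2^{(n)}\big)
&= \big[H(X_1)-H(X_1^{(n)})\big] + \big[H(X_2)-H(X_2^{(n)})\big] \\
&\quad - \big[H(X_1,X_2)-H(X_1^{(n)},X_2^{(n)})\big],
\end{align*}
I would bound each bracket by whichever of the two directions in Corollary~\ref{corr:entropy_est} gives an upper bound on the whole expression, treating the joint distribution as a single random variable over $[k_1]\times[k_2]$ with $k_1k_2$ outcomes. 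For the first claim, the two marginal terms enter with a positive sign, so I apply the ``$H(X)-H(X^{(n)})$'' bound with $k=k_1$ and $k=k_2$, each contributing a $\log(1+\tfrac{k_i-1}{n})$ term together with one square-root term; the joint term enters with a negative sign, so I apply the logarithm-free ``$H(X^{(n)})-H(X)$'' bound with $k=k_1k_2$. This leaves exactly the two logarithms of the first claim, plus three copies of $\sqrt{2\log(2/\delta)\log n/n}$. The second claim is symmetric: the two marginals are now handled by the logarithm-free direction and the joint by the ``$H(X)-H(X^{(n)})$'' direction, so only the single $\log(1+\tfrac{k_1k_2-1}{n})$ term survives.

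The one conceptual point that needs care is that the three empirical entropies are all computed from the \emph{same} $n$ joint samples and are therefore statistically dependent, so I cannot multiply their individual success probabilities. Instead I would allocate a failure budget of $\delta/3$ to each of the three events and take a union bound, which makes all three individual bounds hold simultaneously with probability at least $1-\delta$. This replaces each $\log(2/\delta)$ inside a square root by $\log(2/(\delta/3))=\log(6/\delta)$.

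The main obstacle — really the only nontrivial step — is then reconciling the constant in the square-root term: the union bound naturally produces $3\sqrt{2\log(6/\delta)\log n/n}$, whereas the statement asks for $3\sqrt{6\log(2/\delta)\log n/n}$. I would close this gap with the elementary inequality $\log(6/\delta)\le 3\log(2/\delta)$, which holds for every $\delta\in(0,1]$ because it is equivalent to $\log 3\le \log 4 + 2\log(1/\delta)$ and $\log(1/\delta)\ge 0$ there. Consequently $2\log(6/\delta)\le 6\log(2/\delta)$, so the three per-term square roots can be enlarged to the stated form, completing both bounds.
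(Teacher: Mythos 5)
Your proof is correct and is essentially identical to the paper's argument: the paper establishes this theorem precisely by combining the decomposition in Eq.~\eqref{eq:mi_decomp} with Corollary~\ref{corr:entropy_est} applied three times (to the two marginals and to the joint distribution viewed as a single variable over $[k_1]\times[k_2]$), which is exactly your route. Your union-bound bookkeeping with budget $\delta/3$ per event and the observation that $2\log(6/\delta)\le 6\log(2/\delta)$ for $\delta\in(0,1]$ correctly supply the details that the paper leaves implicit in its ``readily implies'' remark.
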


Based on Theorem \ref{thm:mutual_info}, note that the confidence interval for mutual information consists of the upper bound of the bias term and the statistical error. The statistical error becomes smaller than 0.5 with $n=30000$ and $\delta=0.05$. The bias term also becomes smaller than 0.5 when $n> 3 k_1 \cdot k_2$. Thus if our goal is to achieve an 1-bit additive error of the mutual information with $\delta = 0.05$, we need 
\[
n=\max \left( 3k_1\cdot k_2, 30000 \right)
\]
samples. 
So if the domain size small, i.e. $k_1 \cdot k_2 < 10000$, additive error is dominated by statistical error term, and the domain size is larger than we need more sample to control the magnitude of the bias. 

\subsection{Estimating domain sizes}

Note that the confidence intervals for the entropy and mutual information of random variables given in Sections \ref{sec:entropy_conf_int} and \ref{sec:mi_conf_int} depend on their domain size (denoted $k$, $k_1$ and $k_2$ in the corresponding expressions). When each random variable represents a surface, its domain size is the number of distinct values for the surface among all clients. Since this information is unknown for most surfaces, we substituted the number of \emph{observed} surface values instead. However, this quantity can be an underestimate for the true number of distinct surface values when the sample size is small, so we only used it when the number of samples exceeded the number of observed surface values by a factor of at least 30, a conservative heuristic. Otherwise we omitted all the mutual information terms associated with the surface when computing the Chow-Liu upper bound (Eq.~\eqref{eq:chowliu}), which loosens the upper bound but does not impact its validity.

\section{Minimizing observable impact of the experiment}
\label{sec:observable}

When estimating the fingerprinting risk posed by Web APIs, it can happen that the estimation process itself creates a novel fingerprinting surface. For example, as we explained in Section \ref{sec:methodology}, the second and third phases of our experiment assigned a different list of surfaces $L_c$ to each client $c$. If our experiment had caused clients to record the values of surfaces in $L_c$ while leaving the values of surfaces not in $L_c$ unrecorded, then a website might be able infer $L_c$ by measuring which surfaces take slightly longer to execute on client $c$ then expected. Since $L_c$ varies from client to client, it represents a fingerprinting risk. We mitigated this risk by having each client record the value of every surface in the experiment, whether or not the surface was contained in $L_c$. The recorded surface values were later filtered to only include the surfaces in $L_c$, and then asynchronously reported by the client to a server. 

\section{Special handling of codec, canvas readback and font surfaces}
\label{sec:special}

Since each surface consists of a Web API and possible input parameters, some Web APIs encompass so many surfaces that it is impractical to estimate their entropy even using the Chow-Liu decomposition (Eq.~\eqref{eq:chowliu}). Instead we took a tailored approach to each of these cases. For media codecs, we found that nearly every surface with non-zero entropy was highly correlated with the platform of the device (\emph{i.e.}, Android, Windows, \emph{etc.}), and so platform entropy was a reasonable proxy for their joint entropy. For canvas readback surfaces, we estimated the entropy of each of the 50 most common canvas readback operations, and conservatively used the highest of these entropies as a proxy for the entropy of any other canvas readback operation. For fonts surfaces, we used a formula inspired by previous work \citep{eckersley2010unique, fifield2015fingerprinting}.

\section{Greedy surface subset assignment}
\label{sec:greedy}

As explained in Section \ref{sec:methodology}, in the third phase of our experiment we assigned to each client $c$ a subset of surfaces $L_c$, with the goal of collecting a sufficient number of samples from all pairs of the 60 `core' surfaces. We designed the assignment to satisfy two constraints. First, each surface pair $\{i, j\}$ had to be assigned to at least $n_{ij}$ clients, where $n_{ij}$ is the minimum number of samples needed to accurately estimate the mutual information between surfaces $i$ and $j$, as determined by the derivation is Appendix \ref{sec:mi_conf_int}. Second, the joint entropy of the surfaces in any subset could not exceed a given threshold. We conservatively estimated the joint entropy of a subset using the sum of the marginal entropies of the surfaces in the subset, where each marginal entropy was estimated from data collected in the second phase of the experiment.

We used Algorithm \ref{alg:block} below to construct an assignment that respects the two constraints described above while also attempting to minimize the number of clients in this phase of the experiment. In each iteration, the algorithm greedily selects a new surface subset $L^*$ that satisfies the sample requirements for surface pairs that were not met by any previously selected subset, while also respecting the joint entropy constraint. The subset $L^*$ is then assigned to some clients that were not assigned a subset in a previous iteration.

\begin{algorithm}[!h]
\caption{Greedy surface subset assignment \label{alg:block}}
\begin{algorithmic}[1] 
\Statex {\bf Given:}
    \begin{itemize}
    \item Number of samples $n_{ij}$ needed for each surface pair $\{i, j\}$.
    \item Marginal entropy $h_i$ of each surface $i$.
    \item Upper bound $H$ on joint entropy of any surface subset.
    \end{itemize} 
\Statex
\Statex {\bf Define} for each surface subset $L$:
    \begin{itemize}
    \item Let $h(L) = \sum_{i \in L} h_i$ be the sum of marginal entropies of the surfaces in $L$.
    \item Let $s(L) = \sum_{\{{i, j\}} \subseteq L} n_{ij}$ be the total number of samples needed for all surface pairs in $L$.
    \item Let $m(L) = \max_{\{i, j\} \subseteq L} n_{ij}$ be the maximum number of samples needed for any surface pair in $L$.
    \end{itemize}
\Statex
\WHILE{at least one of the $n_{ij}$’s is non-zero}
    \State Find a surface subset $L^*$ that solves \[\max_{L:~ h(L) \le H} s(L).\]
    \State Select $m(L^*)$ clients that do not have a subset assignment.
    \State For each selected client $c$ let $L_c = L^*$. 
    \State For each $\{i, j\} \subseteq L^*$ let $n_{ij} = 0$.
\ENDWHILE
\end{algorithmic}
\end{algorithm}

\section{Replacing missing mutual information terms}
\label{sec:monotonicity}

Recall from Section \ref{sec:methodology} that if we do not have enough samples of a pair of surfaces $(X_i, X_j)$ then we omit the mutual information term $I(X_i; X_j)$ from the right-hand side of the Chow-Liu decomposition (Eq.~\eqref{eq:chowliu}). We are allowed to do this because mutual information is always at least zero. Alternatively, we can replace $I(X_i; X_j)$ with any valid lower bound. The next theorem gives a lower bound for $I(X_1; X_n)$ that can be significantly larger than zero provided that we have enough samples of the pairs $(X_1, X_2), (X_2, X_3), \ldots, (X_{n-1}, X_n)$. 

\begin{theorem} For any random variables $X_1, \ldots, X_n$
\[
I(X_1; X_n) \ge \sum_{i=1}^{n-1} I(X_i; X_{i+1}) - \sum_{i=2}^{n-1} H(X_i).
\]
\end{theorem}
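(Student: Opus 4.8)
The plan is to prove the statement by induction on $n$, with the entire difficulty concentrated in a single three-variable inequality that serves as the inductive step. The base case $n=2$ is trivial: the intermediate-entropy sum $\sum_{i=2}^{n-1} H(X_i)$ is empty, and the claim reduces to $I(X_1; X_2) \ge I(X_1; X_2)$. For the inductive step I would isolate the claim
\[
I(X_1; X_n) \ge I(X_1; X_{n-1}) + I(X_{n-1}; X_n) - H(X_{n-1}),
\]
which is an instance of the general three-variable bound $I(X; Z) \ge I(X; Y) + I(Y; Z) - H(Y)$ with $X = X_1$, $Y = X_{n-1}$, $Z = X_n$. Granting this, substituting the induction hypothesis $I(X_1; X_{n-1}) \ge \sum_{i=1}^{n-2} I(X_i; X_{i+1}) - \sum_{i=2}^{n-2} H(X_i)$ and collecting terms telescopes immediately into the desired bound, so the chain of arbitrary length follows from the three-variable case.

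The core of the argument, and the main obstacle, is therefore establishing $I(X; Z) \ge I(X; Y) + I(Y; Z) - H(Y)$ for arbitrary random variables. I would attack this by expanding every term into joint entropies. Writing each mutual information via $I(A; B) = H(A) + H(B) - H(A, B)$, the inequality is equivalent, after cancellation, to
\[
H(X, Y) + H(Y, Z) \ge H(Y) + H(X, Z),
\]
which is exactly the submodularity of Shannon entropy applied to the variable sets $\{X, Y\}$ and $\{Y, Z\}$ (whose union is $\{X, Y, Z\}$ and whose intersection is $\{Y\}$), combined with the monotonicity fact $H(X, Y, Z) \ge H(X, Z)$. The cleanest route, which I would use to make the non-negativity transparent, is the exact identity
\[
I(X; Z) - I(X; Y) - I(Y; Z) + H(Y) = I(X; Z \mid Y) + H(Y \mid X, Z),
\]
verified by expanding the conditional mutual information and conditional entropy on the right into joint entropies and matching them against the left-hand side. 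Both terms on the right are non-negative---conditional mutual information is non-negative, and conditional entropy is non-negative---which yields the three-variable inequality directly.

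I expect the algebraic verification of this identity to be the only genuinely fiddly part, though it is entirely routine bookkeeping in the entropy function; the conceptual content is simply recognizing that the slack in the bound decomposes into the conditional mutual information $I(X; Z \mid Y)$ and the residual uncertainty $H(Y \mid X, Z)$ in the intermediate variable. Once the three-variable inequality is in hand, the induction requires no further ideas, so I would present the three-variable lemma first and then give the short telescoping argument.
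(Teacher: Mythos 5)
Your proposal is correct, but it takes a genuinely different route from the paper. The paper's proof is a three-line argument that reuses machinery already on hand: it applies the Chow--Liu inequality (Eq.~\eqref{eq:chowliu}) with the path tree $X_1 - X_2 - \cdots - X_n$, rearranges to get $H(X_1) + H(X_n) - H(X_1,\ldots,X_n) \ge \sum_{i=1}^{n-1} I(X_i;X_{i+1}) - \sum_{i=2}^{n-1} H(X_i)$, and then concludes via the monotonicity fact $H(X_1,\ldots,X_n) \ge H(X_1,X_n)$ together with the identity $I(X_1;X_n) = H(X_1) + H(X_n) - H(X_1,X_n)$. You instead give a self-contained induction whose engine is the three-variable bound $I(X;Z) \ge I(X;Y) + I(Y;Z) - H(Y)$, which you establish through the exact identity
\begin{equation*}
I(X;Z) - I(X;Y) - I(Y;Z) + H(Y) = I(X;Z \mid Y) + H(Y \mid X,Z),
\end{equation*}
and that identity checks out (both sides expand to $H(X,Y) + H(Y,Z) - H(Y) - H(X,Z)$), as does the telescoping step. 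What each approach buys: the paper's proof is shorter and makes transparent that the theorem is nothing more than the chain instance of Chow--Liu plus monotonicity, but it leans on Eq.~\eqref{eq:chowliu} as a black box; your proof needs no appeal to Chow--Liu at all --- indeed, unrolling your induction essentially reproves the chain case of the Chow--Liu bound --- and it yields an interpretable exact expression for the slack in the inequality, namely the conditional mutual information $I(X;Z\mid Y)$ plus the residual uncertainty $H(Y \mid X,Z)$ in the intermediate variable, which the paper's argument does not exhibit.
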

\begin{proof} By the Chow-Liu decomposition (Eq.~\eqref{eq:chowliu})
\[
H(X_1, \ldots, X_n) \le \sum_{i=1}^n H(X_i) - \sum_{i=1}^{n-1} I(X_i; X_{i+1})
\]
and thus by rearranging
\[
H(X_1) + H(X_n) - H(X_1, \ldots, X_n) \ge \sum_{i=1}^{n-1} I(X_i; X_{i+1}) - \sum_{i=2}^{n-1} H(X_i).
\]
The theorem follows by noting that $H(X_1, \ldots, X_n) \ge H(X_1, X_n)$ and $I(X_1; X_n) = H(X_1) + H(X_n) - H(X_1, X_n)$.
\end{proof}

\section{Web APIs in the experiment}
\label{sec:webapis}

The following 161 Web APIs were selected for our experiment according to the criteria described in Section \ref{sec:methodology}.

\begin{lstlisting}[
    basicstyle=\ttfamily\footnotesize,
    numberstyle=\tiny,
    numbers=left,                    
    numbersep=5pt]
AudioBuffer.getChannelData
AudioContext.baseLatency
AudioDecoder.configure
BaseAudioContext.sampleRate
Document.plugins
FeaturePolicy.features
Gamepad.buttons
Gamepad.id
HTMLCanvasElement.getContext
HTMLMediaElement.canPlayType
InputDeviceCapabilities.firesTouchEvents
Keyboard.getLayoutMap
MediaCapabilities.decodingInfo
MediaCapabilities.encodingInfo
MediaDeviceInfo.kind
MediaDeviceInfo.label
MediaDeviceInfo.toJSON
MediaDevices.enumerateDevices
MediaDevices.getDisplayMedia
MediaDevices.getSupportedConstraints
MediaDevices.getUserMedia
MediaRecorder.mimeType
Navigator.appVersion
Navigator.bluetooth
Navigator.canShare
Navigator.cookieEnabled
Navigator.deviceMemory
Navigator.doNotTrack
Navigator.getBattery
Navigator.getGamepads
Navigator.getUserAgent
Navigator.getUserMedia
Navigator.hardwareConcurrency
Navigator.javaEnabled
Navigator.language
Navigator.languages
Navigator.maxTouchPoints
Navigator.mediaCapabilities
Navigator.mediaDevices
Navigator.mimeTypes
Navigator.platform
Navigator.plugins
Navigator.productSub
Navigator.userAgent
Navigator.userAgentData
Navigator.vendor
Navigator.webkitGetUserMedia
NetworkInformation.saveData
PaintWorkletGlobalScope.devicePixelRatio
Plugin.description
Plugin.filename
Plugin.name
RTCIceCandidate.address
RTCIceCandidate.candidate
RTCIceCandidate.port
RTCIceCandidate.relatedAddress
RTCIceCandidate.relatedPort
RTCPeerConnection.getReceivers
RTCRtpReceiver.getCapabilities
RTCRtpSender.getCapabilities
Screen.availHeight
Screen.availLeft
Screen.availTop
Screen.availWidth
Screen.colorDepth
Screen.height
Screen.id
Screen.internal
Screen.left
Screen.name
Screen.pixelDepth
Screen.primary
Screen.scaleFactor
Screen.top
Screen.touchSupport
Screen.width
ScreenOrientation.angle
ScreenOrientation.lock
ScreenOrientation.type
VisualViewport.height
VisualViewport.offsetLeft
VisualViewport.offsetTop
VisualViewport.pageLeft
VisualViewport.pageTop
VisualViewport.scale
VisualViewport.width
WebGLCompressedTextureASTC.getSupportedProfiles
WebGLRenderingContext.getBufferParameter
WebGLRenderingContext.getExtension
WebGLRenderingContext.getParameter
WebGLRenderingContext.getRenderbufferParameter
WebGLRenderingContext.getShaderPrecisionFormat
WebGLRenderingContext.getSupportedExtensions
WebGL2ComputeRenderingContext.getBufferParameter
WebGL2ComputeRenderingContext.getExtension
WebGL2ComputeRenderingContext.getFramebufferAttachmentParameter
WebGL2ComputeRenderingContext.getIndexedParameter
WebGL2ComputeRenderingContext.getInternalformatParameter
WebGL2ComputeRenderingContext.getParameter
WebGL2ComputeRenderingContext.getRenderbufferParameter
WebGL2ComputeRenderingContext.getShaderPrecisionFormat
WebGL2ComputeRenderingContext.getSupportedExtensions
WebGL2RenderingContext.getBufferParameter
WebGL2RenderingContext.getExtension
WebGL2RenderingContext.getIndexedParameter
WebGL2RenderingContext.getInternalformatParameter
WebGL2RenderingContext.getParameter
WebGL2RenderingContext.getRenderbufferParameter
WebGL2RenderingContext.getShaderPrecisionFormat
WebGL2RenderingContext.getSupportedExtensions
WebGL2RenderingContext.makeXRCompatible
WheelEvent.deltaMode
WheelEvent.deltaX
WheelEvent.deltaY
WheelEvent.deltaZ
Window.devicePixelRatio
Window.innerHeight
Window.innerWidth
Window.matchMedia
Window.name
Window.onbeforeunload
Window.ondevicemotion
Window.ondeviceorientation
Window.ondeviceorientationabsolute
Window.onlanguagechange
Window.onmessage
Window.onoffline
Window.ononline
Window.onorientationchange
Window.onscreenschange
Window.onsuspend
Window.ontimeupdate
Window.ontimezonechange
Window.orientation
Window.outerHeight
Window.outerWidth
Window.pageXOffset
Window.pageYOffset
Window.postMessage
Window.postMessage
Window.screenLeft
Window.screenTop
Window.screenX
Window.screenY
Window.scrollX
Window.scrollY
WorkerGlobalScope.onlanguagechange
WorkerGlobalScope.ontimezonechange
WorkerNavigator.appVersion
WorkerNavigator.deviceMemory
WorkerNavigator.hardwareConcurrency
WorkerNavigator.language
WorkerNavigator.languages
WorkerNavigator.mediaCapabilities
WorkerNavigator.platform
WorkerNavigator.serial
WorkerNavigator.usb
WorkerNavigator.userAgent
WorkerNavigator.userAgentData
CSS Media Features
Fonts
\end{lstlisting}

\end{document}